\def\boldpi{\boldsymbol{\pi}}
\def\boldsigma{\boldsymbol{\sigma}}
\def\sdotp{\boldsigma\cdot\boldpi}
\def\sdotE{\boldsigma\cdot\mathbf{E}}
\def\sdotB{\boldsigma\cdot\mathbf{B}}
\def\Edotp{\mathbf{E}\cdot\boldpi}
\def\Bdotp{\mathbf{B}\cdot\boldpi}
\def\Etimesp{\mathbf{E}\times\boldpi}
\def\calH{\mathcal{H}}
\def\calX{\mathcal{X}}
\def\calY{\mathcal{Y}}
\def\calZ{\mathcal{Z}}
\def\mbeta{\tilde{\beta}}
\def\malpha{\tilde{\alpha}}
\def\mbalpha{\tilde{\boldsymbol{\alpha}}}
\def\mbsigma{\tilde{\boldsymbol{\sigma}}}
\def\mgamma{\tilde{\gamma}}
\def\bigO{O}
\newcommand{\secref}[1]{Sec.~\ref{#1}}
\newcommand{\eqnref}[1]{(\ref{#1})}
\newcommand{\thmref}[1]{Theorem~\ref{#1}}
\theoremstyle{definition}
\newtheorem{theorem}{Theorem}
\begin{document}

\title
{Exact Foldy-Wouthuysen transformation of the Dirac-Pauli Hamiltonian in the weak-field limit by the method of direct perturbation theory}

\author{Dah-Wei Chiou}
\email{dwchiou@gmail.com}
\affiliation{Department of Physics, National Taiwan Normal University, Taipei 11677, Taiwan}
\affiliation{Department of Physics and Center for Condensed Matter Sciences, National Taiwan University, Taipei 10617, Taiwan}

\author{Tsung-Wei Chen}
\email{twchen@mail.nsysu.edu.tw} \affiliation{Department of Physics, National Sun Yat-sen University, Kaohsiung 80424, Taiwan}


\begin{abstract}
We apply the method of direct perturbation theory for the Foldy-Wouthuysen (FW) transformation upon the Dirac-Pauli Hamiltonian subject to external electromagnetic fields. The exact FW transformations exist and agree with those obtained by Eriksen's method for two special cases. In the weak-field limit of static and homogeneous electromagnetic fields, by mathematical induction on the orders of $1/c$ in the power series, we rigorously prove the long-held speculation: the FW transformed Dirac-Pauli Hamiltonian is in full agreement with the classical counterpart, which is the sum of the orbital Hamiltonian for the Lorentz force equation and the spin Hamiltonian for the Thomas-Bargmann-Michel-Telegdi equation.
\end{abstract}

\pacs{03.65.Pm, 11.10.Ef, 71.70.Ej}

\maketitle

\tableofcontents
\newpage

\section{Introduction}
The relativistic quantum theory for a spin-$1/2$ particle is described by the Dirac equation  \cite{Dirac1928,Dirac1982}, which, in the rigorous sense, is self-consistent only in the context of quantum field theory as particle-antiparticle pairs can be created and annihilated. The question that naturally arises is whether in the low-energy limit the particle and antiparticle can be treated separately without taking into account the field-theory interaction between them on the grounds that the probability of particle-antiparticle pair creation and annihilation is negligible. It turns out that such separation is possible and indeed gives an adequate description of the relativistic quantum dynamics whenever the relevant energy (the particle's energy interacting with external, e.g., electromagnetic, fields) is much smaller than the Dirac energy gap $2mc^2$ ($m$ is the particle's mass).

The Foldy-Wouthuysen (FW) transformation is the method devised to achieve the particle-antiparticle separation via a series of successive unitary transformations, each of which block-diagonalizes the Dirac Hamiltonian to a certain order of $1/m$ \cite{Foldy1950} (see \cite{Strange2008} for a review). In the same spirit of the standard FW method, many different approaches have been developed for various advantages \cite{Lowding1951, Luttinger1955, Eriksen1958, Douglas1974, Hess1985, Hess1986, Rutkowski1986a, Rutkowski1986b, Rutkowski1986c, Heully1986, Kutzelnigg1990, Silenko2013-analysis, Winkler2003, Silenko2003, Reiher2004a, Reiher2004b, Bliokh2005, Goss2007, Goss2007b, Silenko2008, Peng2012} (also see \cite{Reiher2009 Book} for a review in the context of relativistic quantum chemistry). Particularly, the works by Rutkowski \cite{Rutkowski1986a,Rutkowski1986b,Rutkowski1986c} and Heully \cite{Heully1986} proposed and exploited a self-consistent equation that allows one to obtain the block-diagonalized Dirac Hamiltonian without explicitly evoking decomposition of even and odd Dirac matrices; the perturbation approach developed by Rutkowski is now known as direct perturbation theory (DPT).

Furthermore, to phenomenologically account for any presence of the anomalous magnetic moment, the Dirac equation, augmented with extra terms explicitly dependent on electromagnetic field strength, is extended to the Dirac-Pauli equation to describe the relativistic quantum dynamics of a spin-$1/2$ particle of which the gyromagnetic ratio is different from $q/(mc)$ ($q$ is the particle's charge) \cite{Pauli1941}. The FW methods for the Dirac equation can be straightforwardly carried over to the Dirac-Pauli equation without much difficulty \cite{Silenko2008}.

On the other hand, the classical (non-quantum) dynamics for a relativistic point particle endowed with charge and intrinsic spin in electromagnetic fields is well understood. The orbital motion is governed by the Lorentz force equation and the precession of spin by the Thomas-Bargmann-Michel-Telegdi (T-BMT) equation \cite{Thomas1927,Bargmann1959} (see Chapter 11 of \cite{Jackson1999} for a review). The orbital Hamiltonian for the Lorentz force equation plus the spin Hamiltonian for the T-BMT equation provides a low-energy description of the relativistic spinor dynamics. It is natural to conjecture that, in the weak-field limit of external electromagnetic fields, the Dirac or, more generically, the Dirac-Pauli Hamiltonian, after block diagonalization, should correspond to the sum of the classical orbital and spin Hamiltonians.

This quantum-classical correspondence between the Dirac equation and the Lorentz force equation along with the T-BMT equation is crucial to the problem of finding and interpreting \emph{spin operators} for the Dirac equation --- a problem which has been discussed in the literature for a long time but remains challenging and unsolved in the presence of external fields (see Section 2.4 of \cite{Bagrov Book} and references therein for more discussions). Validity of the correspondence has been investigated from different aspects with various degrees of rigor \cite{Silenko2008,Rubinow1963,Rafa1964,Froh1993,Silenko1995} and explicated in \cite{Chen2014}. In the case of static and homogeneous electromagnetic fields, it has been shown that the FW transformed Dirac-Pauli Hamiltonian is in agreement with the classical Hamiltonian up to the order of $1/m^8$, if nonlinear terms of electromagnetic fields are neglected in the weak-field limit \cite{Chen2010}. Recently, the work of \cite{Chen2010} was extended to the order of $1/m^{14}$ by applying the method of DPT, cast in the style of Kutzelnigg's implementation \cite{Kutzelnigg1990} with a further simplification scheme introduced \cite{Chen2013}.

Although the result of \cite{Chen2013} is very impressive, the long sought-after proof for the full agreement to any arbitrary order is still missing. Thanks to the result obtained in \cite{Chen2013} up to the high order of $1/m^{14}$, we are now able to conjecture the generic expression for terms of any given order in the DPT method and then give a proof by mathematical induction on the orders of power series expansion.\footnote{Various prior works in different approaches have provided algorithms of automated generation of arbitrarily high order terms in the order-by-order expansion (e.g.\ see \cite{Reiher2004a,Reiher2004b}). The method adopted in \cite{Chen2013} can be programmed as an automated algorithm as well, but automation is not very necessary for our purpose because in the end the proof of mathematical induction will ascertain the analytical form of terms in \emph{any} orders.} In this paper, we elaborate on Kutzelnigg's implementation of DPT and present the rigorous proof of the quantum-classical correspondence. As a secondary result, we also show that the exact FW transformations by the DPT method exist and agree with those obtained by Eriksen's method \cite{Eriksen1958} for two special cases of arbitrary magnetostatic field and arbitrary electrostatic field. Various conceptual issues of the FW transformation are also addressed and clarified.\footnote{It should be emphasized that the main purpose of this paper is to provide a rigorous proof of the quantum-classical correspondence. Although some other conceptual issues are also addressed, it is \emph{not} our intent to take part in the debate on mathematical rigor and legitimacy of the FW transformation (see \secref{sec:remarks} for more comments).}

This paper is organized as follows. After briefly reviewing the classical and Dirac-Pauli spinors in \secref{sec:classical spinor} and \secref{sec:Dirac-Pauli spinor}, respectively, we look into the FW transformation with the emphasis on Kutzelnigg's method of DPT in \secref{sec:FW transform}.\footnote{These parts deliberately contain some of the same review materials in \cite{Chen2014}.} We then present the proof for the exact quantum-classical correspondence in the weak-field limit for the Dirac Hamiltonian in \secref{sec:Dirac Hamiltonian} and then for the Dirac-Pauli Hamiltonian in \secref{sec:Dirac-Pauli Hamiltonian}.\footnote{The proof is schematically summarized in a separate article \cite{Chiou:2015pca}, which is much shorter and may be more readable for those who do not intend to know the details.} Conclusions are summarized and discussed in \secref{sec:summary}.

\section{Classical relativistic spinor}\label{sec:classical spinor}
In this section, we briefly review the classical dynamics of a classical relativistic spinor, which is detailed in \cite{Chen2014}.

For a relativistic point particle endowed with electric charge $q$ and intrinsic spin $\mathbf{s}$ subject to external electromagnetic fields $\mathbf{E}$ and $\mathbf{B}$ (the corresponding 4-potential is denoted as $A^\mu=(\phi,\mathbf{A})$ and the electromagnetic tensor by $F_{\mu\nu}$), the orbital motion, which is governed by the Lorentz force equation, and the spin precession, which is governed by the T-BMT equation, are simultaneously described by the total Hamiltonian
\begin{equation}\label{H cl}
H(\mathbf{x},\mathbf{p},\mathbf{s};t)=H_\mathrm{orbit}(\mathbf{x},\mathbf{p};t)
+H_\mathrm{spin}(\mathbf{s},\mathbf{x},\mathbf{p};t)+\bigO(F_{\mu\nu}^2,\hbar^2)
\end{equation}
with the orbital Hamiltonian given by
\begin{equation}\label{H orbit}
H_\mathrm{orbit}(\mathbf{x},\mathbf{p};t)=\sqrt{m^2c^4+c^2\boldpi^2}+q\phi(\mathbf{x},t)
\end{equation}
and the spin Hamiltonian given by
\begin{eqnarray}\label{H spin}
H_\mathrm{spin}(\mathbf{s},\mathbf{x},\mathbf{p};t)
&=&-\mathbf{s}\cdot\left[
\left(\gamma'_m+\frac{q}{mc}\frac{1}{\gamma_{\boldpi}}\right)\mathbf{B}(\mathbf{x})
-\gamma'_m\frac{1}{\gamma_{\boldpi}(1+\gamma_{\boldpi})}
\left(\frac{\boldpi}{mc}\cdot\mathbf{B}(\mathbf{x})\right)\frac{\boldpi}{mc}
\right.\nonumber\\
&&\qquad\qquad
\left.
\mbox{}-\left(\gamma'_m\frac{1}{\gamma_{\boldpi}} +\frac{q}{mc}\frac{1}{\gamma_{\boldpi}(1+\gamma_{\boldpi})}\right)
\left(\frac{\boldpi}{mc}\times\mathbf{E}(\mathbf{x})\right)
\right],
\end{eqnarray}
where the kinematic momentum $\boldpi$ is defined as
\begin{equation}
\boldpi:=\mathbf{p}-\frac{q}{c}\,\mathbf{A}(\mathbf{x},t),
\end{equation}
the Lorentz factor associated with $\boldsymbol{\pi}$ is defined as
\begin{equation}\label{gamma pi cl}
\gamma_{\boldsymbol{\pi}}:=\sqrt{1+\left(\frac{\boldsymbol{\pi}}{mc}\right)^2}\,,
\end{equation}
and $\gamma'_m$ is the \emph{anomalous gyromagnetic ratio}
\begin{equation}
\gamma'_m:=\gamma_m-\frac{q}{mc}
\end{equation}
with $\gamma_m$ being the total gyromagnetic ratio.

It should be remarked that the classical theory described by \eqnref{H cl} respects Lorentz invariance only within a high degree of accuracy, unless the terms of $\bigO(F_{\mu\nu}^2,\hbar^2)$ are appropriately supplemented by a more fundamental quantum theory such as the Dirac-Pauli theory. In the weak-field limit, the nonlinear electromagnetic corrections of $\bigO(F_{\mu\nu}^2)$ can be neglected, and the particle's velocity is given by
\begin{eqnarray}
\mathbf{v}\equiv\frac{d\mathbf{x}}{dt}
=\boldsymbol{\nabla}_\mathbf{p}H_\mathrm{orbit}+\boldsymbol{\nabla}_\mathbf{p}H_\mathrm{spin}
\approx\frac{\boldsymbol{\pi}}{m\gamma_{\boldsymbol{\pi}}}
\end{eqnarray}
provided
\begin{equation}\label{small H spin}
H_\mathrm{spin}\ll mc^2,
\end{equation}
which is true in the weak-field limit. Consequently, $\boldpi$ remains to be the kinematic momentum associated with $\mathbf{v}$, i.e.,
\begin{equation}\label{pi appro}
\boldpi\approx m\mathbf{U}\equiv \gamma m \mathbf{v},
\end{equation}
and $\gamma_{\boldpi}$ is to be identified with the ordinary Lorentz boost factor, i.e.,\
\begin{equation}\label{gamma pi appro}
\gamma_{\boldpi}\approx \gamma:=\frac{1}{\sqrt{1-\mathbf{v}^2/c^2}}.
\end{equation}
Furthermore, the Dirac-Pauli theory also gives rise to the Darwin term of $\bigO(\hbar^2)$, which has no classical (non-quantum) correspondence and does not show up in the case of homogeneous fields.

\section{Dirac-Pauli spinor}\label{sec:Dirac-Pauli spinor}
The relativistic quantum theory of a spin-$1/2$ particle subject to external electromagnetic fields is described by the Dirac equation \cite{Dirac1928,Dirac1982}
\begin{equation}\label{Dirac eq}
\mgamma^\mu D_\mu|\psi\rangle+i\frac{mc}{\hbar}|\psi\rangle=0,
\end{equation}
where the Dirac bispinor $|\psi\rangle=(\chi,\varphi)^T$ is composed of two 2-component Weyl spinors $\chi$ and $\varphi$, the covariant derivative $D_\mu$ is given by
\begin{eqnarray}
D_\mu&:=&\partial_\mu +\frac{iq}{\hbar c}A_\mu \equiv -\frac{i}{\hbar}\pi_\mu :=-\frac{i}{\hbar}\left(p_\mu-\frac{q}{c}A_\mu\right)\nonumber\\
&=&\left(\frac{1}{c}\frac{\partial}{\partial t}+\frac{iq}{\hbar c}\phi,\,\boldsymbol{\nabla}-\frac{iq}{\hbar c}\mathbf{A}\right)
\equiv -\frac{i}{\hbar}\left(\frac{E-q\,\phi}{c}, \,-\left(\mathbf{p}-\frac{q}{c}\mathbf{A}\right)\right)
\end{eqnarray}
with $p^\mu=(E/c,\mathbf{p})$ being the 4-vector of canonical energy and momentum and $\pi^\mu=(W/c,\boldsymbol{\pi})$ being the 4-vector of kinematic energy and momentum, and $\mgamma^\mu$ are $4\times4$ matrices\footnote{Throughout this paper, a tilde is attached to denote a $4\times4$ matrix.} that satisfy
\begin{equation}
\mgamma^\mu\mgamma^\nu+\mgamma^\nu\mgamma^\mu = 2g^{\mu\nu}.
\end{equation}
The Dirac equation gives rise to the magnetic moment with $\gamma_m=q/(mc)$ (i.e., the $g$-factor is given by $g=2$). To incorporate any anomalous magnetic moment $\mu'=\gamma'_m\hbar/2$, one can modify the Dirac equation to the Dirac-Pauli equation with augmentation of explicit dependence on field strength \cite{Silenko2008,Pauli1941}:
\begin{equation}\label{Dirac-Pauli eq}
\mgamma^\mu D_\mu|\psi\rangle+i\frac{mc}{\hbar}|\psi\rangle
+\frac{i\mu'}{2c}\mgamma^\mu\mgamma^\nu F_{\mu\nu}|\psi\rangle=0.
\end{equation}

The Pauli-Dirac equation can be cast in the Hamiltonian formalism as
\begin{equation}\label{Dirac-Pauli eq w H}
i\hbar\frac{\partial}{\partial t}|\psi\rangle = \tilde{\calH}|\psi\rangle
\end{equation}
with the Dirac Hamiltonian $\tilde{H}$ and the Dirac-Pauli Hamiltonian $\tilde{\calH}$ defined as
\begin{subequations}\label{H Dirac and Dirac-Pauli}
\begin{eqnarray}
\label{H Dirac}
\tilde{H} &=& mc^2\mbeta+c\,\mbalpha\cdot\left(\mathbf{p}-\frac{q}{c}\mathbf{A}\right)+q\phi
\equiv
\left(
  \begin{array}{cc}
    mc^2+q\phi & c\,\boldsigma\cdot\boldpi \\
    c\,\boldsigma\cdot\boldpi & -mc^2+q\phi \\
  \end{array}
\right)
,\\
\label{H Dirac-Pauli}
\tilde{\calH} &=& \tilde{H} +\mu'\left(-\mbeta\mbsigma\cdot\mathbf{B}+i\mbeta\mbalpha\cdot\mathbf{E}\right)
\equiv
\left(
  \begin{array}{cc}
    mc^2+q\phi-\mu'\sdotB & c\,\boldsigma\cdot\boldpi+i\mu'\sdotE \\
    c\,\boldsigma\cdot\boldpi-i\mu'\sdotE & -mc^2+q\phi+\mu'\sdotB \\
  \end{array}
\right),
\qquad
\end{eqnarray}
\end{subequations}
where the $4\times4$ matrices are given explicitly by
\begin{equation}
\mbeta=\left(
         \begin{array}{cc}
           \openone & 0 \\
           0 & -\openone \\
         \end{array}
       \right),\
\mbalpha=\left(
           \begin{array}{cc}
             0 & \boldsigma \\
             \boldsigma & 0 \\
           \end{array}
         \right),\
\mbsigma=\left(
           \begin{array}{cc}
             \boldsigma & 0 \\
             0 & \boldsigma \\
           \end{array}
         \right),
\end{equation}
and $\boldsigma=(\sigma_x,\sigma_y,\sigma_z)$ are the $2\times2$ Pauli matrices.
Accordingly, the $\mgamma$ matrices are given by
\begin{equation}
\mgamma^0=\mbeta,
\quad
\mgamma^i=\mbeta\malpha^i=\left(
           \begin{array}{cc}
             0 & \sigma_i \\
             -\sigma_i & 0 \\
           \end{array}
         \right).
\end{equation}

\section{Foldy-Wouthuysen transformation}\label{sec:FW transform}
The Dirac or Dirac-Pauli Hamiltonians \eqnref{H Dirac and Dirac-Pauli} (or, more generally, with other corrections) can be schematically put in the form
\begin{equation}
\tilde{\calH}=\mbeta mc^2+\tilde{\mathcal{O}}+\tilde{\mathcal{E}},
\end{equation}
where $\tilde{\mathcal{E}}$ is the ``even'' part that commutes with $\mbeta$, i.e., $\mbeta\tilde{\mathcal{E}}\mbeta=\tilde{\mathcal{E}}$, while $\tilde{\mathcal{O}}$ is the ``odd'' part that anticommutes with $\mbeta$, i.e., $\mbeta\tilde{\mathcal{O}}\mbeta=-\tilde{\mathcal{O}}$. Because of the presence of the odd part, the Hamiltonian in the Dirac bispinor representation is not block-diagonalized, and thus the particle and antiparticle components are entangled in each of the Weyl spinors $\chi$ and $\varphi$. The question that naturally arises is whether we can find a representation in which the particle and antiparticle are separated, or equivalently, the Hamiltonian is block-diagonalized. Foldy and Wouthuysen have shown that such a representation is possible \cite{Foldy1950,Strange2008}. The Foldy-Wouthuysen (FW) transformation is a unitary and nonexplicitly time-dependent transformation on the Dirac bispinor
\begin{subequations}
\begin{eqnarray}
|\psi\rangle&\rightarrow&|\psi_\mathrm{FW}\rangle=\tilde{U}|\psi\rangle,\\
\label{H'}
\tilde{\calH}&\rightarrow&
\tilde{\calH}_\mathrm{FW}=\tilde{U}\tilde{\calH}\tilde{U}^\dag,
\end{eqnarray}
\end{subequations}
which leaves \eqnref{Dirac-Pauli eq w H} in the form
\begin{equation}\label{H' psi'}
i\hbar\frac{\partial}{\partial t}|\psi_\mathrm{FW}\rangle = \tilde{\calH}_\mathrm{FW}|\psi_\mathrm{FW}\rangle
\end{equation}
and block-diagonalizes the Hamiltonian, i.e.,
$[\mbeta,\tilde{\calH}_\mathrm{FW}]=0$.
As the FW transformation separates the particle and antiparticle components, the two diagonal blocks of $\tilde{\calH}_\mathrm{FW}$ are adequate to describe the relativistic quantum dynamics of the spin-1/2 particle and antiparticle respectively.\footnote{If $\tilde{U}$ is explicitly time-dependent, instead of \eqnref{H'}, the diagonalized Hamiltonian is given by
$\tilde{\calH}_\mathrm{FW}=\tilde{U}\tilde{\calH}\tilde{U}^\dag -i\hbar\, \tilde{U}\frac{\partial}{\partial t}\tilde{U}^\dag$, which is beyond the scope of the standard FW scenario. Throughout this paper, we consider only the case in \emph{static} fields. For the nonstandard FW transformation involving non-static fields, see \cite{Silenko2014} for more details.}

However, it should be remarked that, rigorously, the Dirac equation is self-consistent only in the context of quantum field theory, in which the particle-antiparticle pairs can be created and annihilated. On this account, it might not be legitimate to block-diagonalize the Dirac Hamiltonian or its phenomenological extension such as the Dirac-Pauli Hamiltonian. In fact, some doubts have been thrown on the mathematical rigor of the FW transformation \cite{Thaller1992,Brouderr1996} (but also see \cite{Silenko2015} for discussion on its validity). If the unitary FW transformation does not exist after all, the power series used in any order-by-order methods does not converge and high-order terms might be misleading and disagree with those obtained by different methods.\footnote{For example, for the Dirac theory in the presence of both electric and magnetic fields, the term of order $F_{\mu\nu}^2$ in the method of DPT is given by $-\frac{q^2\hbar^2}{8m^3c^4}\mathbf{B}^2$, while it is given by $\frac{q^2\hbar^2}{8m^3c^4}\left(\mathbf{E}^2-\mathbf{B}^2\right)$ in the standard FW method (see \cite{Chen2013}). (Nevertheless, these two methods agree with each other on the terms linear in $F_{\mu\nu}$).} However, as will be shown in Sections \ref{sec:special case I} and \ref{sec:special case II}, the exact FW transformation does exist at least for two special cases, suggesting that particle-antiparticle separation is consistent and does not lead to any disagreement in these special situations.\footnote{As we will see shortly, the method of DPT yields exactly the same results of Eriksen's method for these two cases.} For more special cases, see \cite{Nikitin:1998}, which gives a wide class of external electromagnetic fields that admit the exact FW transformation.

Furthermore, in the regime of weak fields such that the energy interacting with electromagnetic fields does not exceed the Dirac energy gap $2mc^2$, we expect that the probability of pair creation and annihilation is negligible, and accordingly the FW transformation remains sensible and the block-diagonalized Hamiltonian is adequate to describe the relativistic quantum dynamics of the spin-$1/2$ particle and antiparticle separately without taking into account the field-theory interaction with each other. Starting from \secref{sec:weak-field limit}, this paper is mainly devoted to this topic.

It should be noted that even if the unitary FW transformation exists, it is far from unique, as one can easily perform further unitary transformations that preserve the block decomposition upon the block-diagonalized Hamiltonian. The non-uniqueness does not lead to any ambiguity, as different block-diagonalization transformations are unitarily equivalent to one another and thus yield the same physics. While the physics is the same, however, the pertinent operators $\boldsigma$, $\mathbf{x}$, and $\mathbf{p}$ may represent very different physical quantities in different representations. To figure out the operators' physical interpretations, it is crucial to compare the resulting FW transformed Hamiltonian with the classical counterpart in a certain classical limit via the correspondence principle. The comparison will be carried out explicitly in the weak-field limit for Kutzelnigg's method of DPT; it turns out that, in Kutzelnigg's method (and in fact in most FW methods in the literature), $\boldsigma$, $\mathbf{x}$, and $\mathbf{p}$ simply represent the spin, position, and conjugate momentum of the particle (as decoupled from the antiparticle) in the resulting FW representation. In other words, the method is ``minimalist'' in the sense that it does not give rise to further transformations that obscure the operators' interpretations other than block diagonalization.

There are various methods for the FW transformation with different advantages. In this paper, we adopt Kutzelnigg's implementation \cite{Kutzelnigg1990} of DPT \cite{Rutkowski1986a,Rutkowski1986b,Rutkowski1986c,Heully1986} improved with a further simplification scheme \cite{Chen2013}.

\subsection{Method of direct perturbation theory}
In Kutzelnigg's implementation \cite{Kutzelnigg1990} of DPT \cite{Rutkowski1986a,Rutkowski1986b,Rutkowski1986c,Heully1986}, the FW unitary transformation is assumed to take the form
\begin{equation}\label{U and Udag}
\tilde{U}=
\left(
  \begin{array}{cc}
    \calY & \calY\calX^\dag \\
    -\calZ\calX & \calZ \\
  \end{array}
\right),
\qquad
\tilde{U}^\dag=
\left(
  \begin{array}{cc}
    \calY & -\calX^\dag\calZ \\
    \calX\calY & \calZ \\
  \end{array}
\right),
\end{equation}
where the $2\times2$ hermitian operators $\calY$ and $\calZ$ are defined as
\begin{equation}
\calY=\calY^\dag=\frac{1}{\sqrt{1+\calX^\dag\calX}}, \qquad
\calZ=\calZ^\dag=\frac{1}{\sqrt{1+\calX\calX^\dag}}
\end{equation}
for some operator $\calX$ to be determined. It is easy to show that
\begin{equation}
\tilde{U}\tilde{U}^\dag=\left(
          \begin{array}{cc}
            \calY\left(1+\calX^\dag\calX\right)\calY & 0 \\
            0 & \calZ\left(1+\calX\calX^\dag\right)\calZ \\
          \end{array}
        \right)
        =1.
\end{equation}

Generically, we assume the Hamiltonian operator $\tilde{\calH}$ takes the form
\begin{equation}\label{H generic form}
\tilde{\calH}=
\left(
  \begin{array}{cc}
    H_+ & H_0 \\
    H_0^\dag & H_- \\
  \end{array}
\right),
\quad
\text{with}\
H_+^\dag=H_+,\ H_-^\dag=H_-,
\end{equation}
and the FW transformed Hamiltonian is then given by
\begin{eqnarray}
&&\tilde{\calH}_\mathrm{FW}
\equiv
\left(
  \begin{array}{cc}
    \calH_\mathrm{FW} & 0 \\
    0 & \bar{\calH}_\mathrm{FW} \\
  \end{array}
\right)
=\tilde{U}\tilde{\calH}\tilde{U}^\dag\\
&=&\left(
   \begin{array}{cc}
     \calY\left(H_++H_0\calX+\calX^\dag H_0^\dag+\calX^\dag H_-\calX\right)\calY &
     \calY\left(H_0-H_+\calX^\dag+\calX^\dag H_--\calX^\dag H_0^\dag\calX^\dag\right)\calZ \\
     \calZ\left(H_0^\dag-\calX H_++H_-\calX-\calX H_0\calX\right)\calY &
     \calZ\left(H_--H_0^\dag\calX^\dag-\calX H_0+\calX H_+^\dag\calX^\dag\right)\calZ \\
   \end{array}
 \right).\nonumber
\end{eqnarray}
The requirement that the off-diagonal blocks of $\tilde{\calH}_\mathrm{FW}$ vanish demands $\calX$ to satisfy
\begin{subequations}\label{conditions for X}
\begin{eqnarray}
&&H_0^\dag-\calX H_++H_-\calX-\calX H_0\calX = 0,\\
&&H_0-H_+\calX^\dag+\calX^\dag H_--\calX^\dag H_0^\dag\calX^\dag = 0,
\end{eqnarray}
\end{subequations}
and meanwhile the diagonal blocks read as
\begin{subequations}\label{HFW and barHFW 1}
\begin{eqnarray}
\calH_\mathrm{FW}
&=& \calY\left(H_++H_0\calX+\calX^\dag H_0^\dag+\calX^\dag H_-\calX\right)\calY,\\
\bar{\calH}_\mathrm{FW}
&=& \calZ\left(H_--H_0^\dag\calX^\dag-\calX H_0+\calX H_+^\dag\calX^\dag\right)\calZ,
\end{eqnarray}
\end{subequations}
which are manifestly hermitian.
Under the condition of \eqnref{conditions for X}, \eqnref{HFW and barHFW 1} can be further simplified as
\begin{subequations}\label{HFW and barHFW 2}
\begin{eqnarray}
\label{HFW and barHFW 2 a}
\calH_\mathrm{FW}
&=& \calY\left(H_++H_0\calX
+\calX^\dag\big(\calX H_++\calX H_0\calX\big)\right)\calY
=\calY\left((1+\calX^\dag\calX)\big(H_++H_0\calX\big)\right)\calY\nonumber\\
&=& \calY^{-1}\big(H_++H_0\calX\big)\calY,\\
\label{HFW and barHFW 2 b}
\bar{\calH}_\mathrm{FW}
&=& \calZ\left(H_--H_0^\dag\calX^\dag
+\calX\big(\calX^\dag H_--\calX^\dag H_0^\dag\calX^\dag\big)\right)\calZ
=\calZ\left((1+\calX\calX^\dag)\big(H_--H_0^\dag\calX^\dag\big)\right)\nonumber\\
&=&\calZ^{-1}\big(H_--H_0^\dag\calX^\dag\big)\calZ.
\end{eqnarray}
\end{subequations}

In the Dirac or Dirac-Pauli theory, the Hamiltonian \eqnref{H generic form} is explicitly given by \eqnref{H Dirac and Dirac-Pauli}. Consider the formal replacement:
\begin{equation}
\mathbf{p},\boldpi,\boldsigma,q,\mu',i
\rightarrow
-\mathbf{p},-\boldpi,-\boldsigma,-q,-\mu',-i,
\end{equation}
which corresponds to
\begin{equation}\label{H+ to H-}
H_+ \rightarrow -H_-,
\qquad
H_0 \rightarrow H_0^\dag,
\end{equation}
and accordingly, by \eqnref{conditions for X},
\begin{equation}\label{X to Xdag}
\calX \rightarrow \calX^\dag.
\end{equation}
Comparison between \eqnref{HFW and barHFW 2 a} and \eqnref{HFW and barHFW 2 b} by reference to \eqnref{H+ to H-} and \eqnref{X to Xdag} then implies
\begin{equation}\label{HFW to barHFW}
\bar{\calH}_\mathrm{FW}(\mathbf{x},\boldpi,\boldsigma;q,\mu')
=-\calH_\mathrm{FW}(\mathbf{x},-\boldpi,-\boldsigma;-q,-\mu').
\end{equation}
That is, $\bar{\calH}_\mathrm{FW}$ takes the form of $\calH_\mathrm{FW}$ by formally replacing $\boldpi,\boldsigma,q,\mu'$ with $-\boldpi,-\boldsigma,-q,-\mu'$ (which accounts for the charge conjugation) in addition to an overall minus sign (which account for the negative frequency).\footnote{Since $\bar{\calH}_\mathrm{FW}$ can be easily obtained by \eqnref{HFW to barHFW} once $\cal{H}_\mathrm{FW}$ is found, we focus only on the part of $\calH_\mathrm{FW}$ in the rest of this paper. When $\calH_\mathrm{FW}$ and $\bar{\calH}_\mathrm{FW}$ are combined to form $\tilde{\calH}_\mathrm{FW}$, the matrix $\mbeta$ will appear accordingly in the expression of $\tilde{\calH}_\mathrm{FW}$ as can be seen in Equations (3.14), (3.23), and (3.29) in \cite{Chen2014}.} (Also see \cite{Chen2014} for comments on the $CPT$ symmetries.)

For the Dirac-Pauli theory, \eqnref{conditions for X} and \eqnref{HFW and barHFW 2} read explicitly as
\begin{eqnarray}\label{condition calX}
{2mc^2}\calX&=&-\calX c\,\sdotp\,\calX+c\,\sdotp+q[\phi,\calX]\nonumber\\
&& \mbox{}-i\mu'\sdotE-i\mu'\calX\sdotE\,\calX +\mu'\{\calX,\sdotB\}
\end{eqnarray}
and
\begin{equation}\label{calHFW Kutzelnigg}
\calH_\mathrm{FW}
= mc^2 + \sqrt{1+\calX^\dag\calX}
\left(q\phi +c\,\sdotp\calX -\mu'\sdotB +i\mu'\sdotE\,\calX\right)
\frac{1}{\sqrt{1+\calX^\dag\calX}}.
\end{equation}
Particularly, for the Dirac theory, \eqnref{condition calX} and \eqnref{calHFW Kutzelnigg} reduce to (by simply setting $\mu'=0$)
\begin{equation}\label{condition X}
2mc^2X=
-Xc\,\sdotp X+c\,\sdotp+q[\phi,X]
\end{equation}
and
\begin{equation}\label{HFW Kutzelnigg}
H_\mathrm{FW} = mc^2 + \sqrt{1+X^\dag X}\,\left(q\phi+c\,\sdotp X\right)\frac{1}{\sqrt{1+X^\dag X}},
\end{equation}
where we have used the notations $X$ and $H_\mathrm{FW}$ in place of $\calX$ and $\calH_\mathrm{FW}$ when the Dirac-Pauli theory is reduced to the Dirac theory.

As caveated previously, the Hamiltonian $\tilde{\calH}$ might not be block-diagonalizable at all and on this account there is no guarantee that the operator $\calX$ satisfying \eqnref{condition calX} or $X$ satisfying \eqnref{condition X} exists. However, as we will see, $\calX$ or $X$ does exist in two special cases as well as in the case of homogeneous fields in the weak-field limit; accordingly $\tilde{\calH}$ is block-diagonalizable in these situations.

\subsection{Special case I}\label{sec:special case I}
As the first special case, let us consider a Dirac spinor ($\mu'=0$) with charge $q$ subject to a static magnetic field ($\partial_t\mathbf{B}=0$, $\partial_t\mathbf{A}=0$) but with no electric field ($\mathbf{E}=0$, $\phi=0$). The condition \eqnref{condition X} becomes a quadratic equation in $X$:
\begin{equation}
2mc^2X=-Xc\,\sdotp X+c\,\sdotp,
\end{equation}
which admits an exact solution
\begin{equation}\label{X case I}
X=X^\dag=\frac{c\,\sdotp}{mc^2+\sqrt{m^2c^4+c^2(\sdotp)^2}}.
\end{equation}
Equation~\eqnref{HFW Kutzelnigg} with $\phi=0$ then yields
\begin{equation}\label{HFW case I}
H_\mathrm{FW}=mc^2+c\,\sdotp X
=\sqrt{m^2c^4+c^2(\sdotp)^2}
=\sqrt{m^2c^4+c^2\boldpi^2-q\hbar\,c\sdotB}
\end{equation}
by \eqnref{identity 3}. The resulting FW transformed Hamiltonian in \eqnref{HFW case I} is exactly the same as that obtained by Eriksen's method \cite{Eriksen1958,Chen2014}.

The fact that the Dirac Hamiltonian in a static magnetic field can be block-diagonalized suggests that it is legitimate to ignore creation or annihilation of particle-antiparticle pairs. In fact, it has been shown that, in the context of QED, the charged particle-antiparticle pairs are \emph{not} produced by any static magnetic field no matter how strong the field strength is, since the instanton actions for tunneling probability for pair production are infinite \cite{Kim:2000un,Kim:2011cx}.\footnote{However, when the magnetic field changes in time, particle-antiparticle pairs can be produced \cite{Kim:2013lda}, but this situation is beyond the scope of the standard FW scenario, in which $\tilde{U}$ is assumed to be nonexplicitly time-dependent.}

If we turn off both electric and magnetic fields, \eqnref{HFW case I} reduces to
\begin{equation}
H_\mathrm{FW}=\sqrt{m^2c^4+c^2\mathbf{p}^2}\,,
\end{equation}
which is the FW transformed Hamiltonian of a free particle.

Another interesting case is of a massless spinor. When it is subject only to a static magnetic field or it carries no charge ($q=0$, such as a massless neutrino), \eqnref{X case I} with $m=0$ yields $X=X^\dag=1$, which follows from \eqnref{U and Udag} that
\begin{equation}\label{U case massless}
\tilde{U}=\frac{1}{\sqrt{2}}
\left(
  \begin{array}{cc}
    1 & 1 \\
    -1 & 1 \\
  \end{array}
\right).
\end{equation}
The trivial FW transformation \eqnref{U case massless} is nothing but the unitary transformation that transforms the Dirac basis to the Weyl basis.\footnote{In the Weyl basis, it is well known that the upper two components are decoupled from the lower two components for an uncharged massless spinor.}

Also see \cite{Nikitin:1998} and references therein for more discussions on the exact FW transformation.

\subsection{Special case II}\label{sec:special case II}
As the second special case, let us consider a Dirac-Pauli spinor with zero charge ($q=0$) but nonzero magnetic moment ($\mu'\neq0$) subject to a static electric field ($\partial_t\mathbf{E}=0$, $\partial_t\phi=0$) but with no magnetic field ($\mathbf{B}=0$, $\mathbf{A}=0$).\footnote{A Dirac-Pauli spinor with $q=0$ but $\mu'\neq0$ can be used to describe spin-$1/2$ uncharged baryons such as protons. However, this description only gives an effective theory as Pauli's prescription for inclusion of anomalous magnetic moment is only phenomenological.} The condition \eqnref{condition calX} now reads as
\begin{equation}\label{condition calX case II}
2mc^2\calX=-\calX\Omega\calX+\Omega^\dag,
\end{equation}
where we define the operators
\begin{subequations}
\begin{eqnarray}
\Omega &:=& c\,\boldsigma\cdot\mathbf{p} +i\mu'\sdotE,\\
\Omega^\dag &:=& c\,\boldsigma\cdot\mathbf{p} -i\mu'\sdotE.
\end{eqnarray}
\end{subequations}
Multiplying $\Omega$ on \eqnref{condition calX case II} from the left yields a quadratic equation in $\Omega\calX$:
\begin{equation}
\left(\Omega\calX\right)^2+2mc^2\left(\Omega\calX\right)-\Omega\Omega^\dag = 0.
\end{equation}
This admits an exact solution
\begin{equation}\label{OmegaX}
\Omega\calX = -mc^2 +\sqrt{m^2c^4+\Omega\Omega^\dag},
\end{equation}
which is manifestly hermitian, i.e.,
\begin{equation}\label{OmegaX dag}
\Omega\calX =\left(\Omega\calX\right)^\dag\equiv\calX^\dag\Omega^\dag.
\end{equation}
Meanwhile, multiplying $\calX^\dag$ on \eqnref{condition calX case II} from the left and applying \eqnref{OmegaX dag}, we have
\begin{equation}
\left(2mc^2+\Omega\calX\right)\calX^\dag\calX=\calX^\dag\Omega^\dag=\Omega\calX,
\end{equation}
which follows
\begin{equation}
\calX^\dag\calX=\frac{\Omega\calX}{2mc^2+\Omega\calX}.
\end{equation}
As $\calX^\dag\calX$ is a function of $\Omega\calX$,  $\calX^\dag\calX$ commutes with $\Omega\calX$. As a result, \eqnref{calHFW Kutzelnigg} gives
\begin{eqnarray}\label{HFW case II}
\calH_\mathrm{FW}
&=& mc^2 + \sqrt{1+\calX^\dag\calX}
\left(\Omega\calX\right)
\frac{1}{\sqrt{1+\calX^\dag\calX}}\nonumber\\
&=&mc^2+\Omega\calX
=\sqrt{m^2c^4+\Omega\Omega^\dag}\nonumber\\
&=&\Big[m^2c^4 +c^2\mathbf{p}^2 -\mu'\hbar\,c\,\boldsymbol{\nabla}\cdot\mathbf{E} +\mu'c\left(\mathbf{p}\times\mathbf{E}-\mathbf{E}\times\mathbf{p}\right)\cdot\boldsigma
+\mu'^2\mathbf{E}^2\Big]^{1/2},
\end{eqnarray}
where we have used \eqnref{identity 1} and \eqnref{identity 3} to compute $\Omega\Omega^\dag$.

Like the first special case, the resulting FW transformed Hamiltonian in \eqnref{HFW case II} is exactly the same as that obtained by Eriksen's method \cite{Eriksen1958,Chen2014}.
Unlike the first special case, however, the physical interpretation and relevance of the fact that the Hamiltonian can be exactly block-diagonalized is not well understood, as the second special case is rather artificial. Closer investigations into the mathematical structure of QED for further insight are needed.

\subsection{Weak-field limit}\label{sec:weak-field limit}
When the external electromagnetic field is weak enough, we expect that the FW transformed Hamiltonian exists and agrees with the classical Hamiltonian given by \eqnref{H cl}--\eqnref{H spin} except for some quantum corrections that have no classical correspondence. By denoting the Dirac or Dirac-Pauli Hamiltonian as $\tilde{\calH}(\phi,\mathbf{A},\mathbf{E},\mathbf{B})$, the rigorous mathematical statement reads as follows. The $4\times4$ unitary matrix $\tilde{U}$ exists such that the \emph{formal} linear-field limit defined as
\begin{equation}\label{weak-field limit}
\lim_{\lambda\rightarrow0}
\frac{\tilde{U}\
\tilde{\calH}(\lambda\phi,\lambda\mathbf{A},\lambda\mathbf{E},\lambda\mathbf{B})
\tilde{U}^\dagger}
{\lambda}
\end{equation}
is block-diagonal and in agreement with the classical counterpart, even though $\tilde{\calH}$ itself might not be exactly diagonalizable. Physically, this means the particle-antiparticle separation remains legitimate when the electromagnetic field is weak enough so that the energy interacting with electromagnetic fields does not exceed the Dirac energy gap. It should be noted that while the FW transformed Hamiltonian is only \emph{approximate} from the physical point of view, it is \emph{exact} in the formal limit \eqnref{weak-field limit} from the mathematical point of view.

As detailed in \cite{Chen2014}, the two special cases in Sections \ref{sec:special case I} and \ref{sec:special case II} in conjunction suggest that, in the weak-field limit, the FW transformed Dirac-Pauli Hamiltonian takes the form
\begin{eqnarray}\label{calHFW conjecture}
\calH_\mathrm{FW}(\mathbf{x},\mathbf{p},\boldsigma)
&=&\sqrt{c^2\boldsymbol{\pi}^2+m^2c^4}\,+q\phi\nonumber\\
&&
\mbox{}
-\frac{\hbar}{2}\,\boldsigma\cdot
\Bigg[
\left(\gamma'_m+\frac{q}{mc}\frac{1}{\gamma_{\boldpi}}\right)\mathbf{B}
-\gamma'_m\frac{1}{\gamma_{\boldpi}(1+\gamma_{\boldpi})}
\frac{\overline{\left(\boldpi\cdot\mathbf{B}\right)\boldpi}}{m^2c^2}
\nonumber\\
&&\qquad\qquad
\mbox{}-\left(\gamma'_m\frac{1}{\gamma_{\boldpi}} +\frac{q}{mc}\frac{1}{\gamma_{\boldpi}(1+\gamma_{\boldpi})}\right)
\frac{\overline{\boldpi\times\mathbf{E}}}{mc}
\Bigg]_\mathrm{Weyl}
\nonumber\\
&&\mbox{}
+\frac{\hbar^2}{4mc}\left(\frac{q}{2mc}-\gamma'_m\right)
\left(\frac{\boldsymbol{\nabla}\cdot\mathbf{E}}{\gamma_{\boldsymbol{\pi}}}
\right)_\mathrm{Weyl},
\end{eqnarray}
where $\overline{(\cdots)}$ and $(\cdots)_\mathrm{Weyl}$ denote specific symmetrization for operator orderings defined in \cite{Chen2014}.
$\calH_\mathrm{FW}$ in \eqnref{calHFW conjecture} is in full agreement with the classical counterpart given by \eqnref{H cl}--\eqnref{H spin} with $\mathbf{s}=\hbar\boldsigma/2$ except for the operator orderings and the Darwin term involving $\hbar^2$, both of which have no classical correspondence.

The form of \eqnref{calHFW conjecture} is conjectured from the two special cases, which are complementary to each other, and still requires further confirmation for the cases in the presence of both $\mathbf{E}$ and $\mathbf{B}$. Its validity has been confirmed in \cite{Chen2013} by Kutzelnigg's method of DPT up to the order of $(\frac{\boldpi}{mc})^{14}$ for the case of static and homogeneous electromagnetic fields, whereby the Darwin term vanishes and there are no complications arising from operator orderings thanks to homogeneity, and the FW transformation remains explicitly time-independent and thus in conformity with the standard FW scenario  thanks to staticity \cite{Chen2014}.  Based on the results obtained in \cite{Chen2013}, we are able to prove by mathematical induction that, in static and homogeneous electromagnetic fields, the FW transformed Hamiltonian in the weak-field limit is completely in agreement with the classical counterpart. We present the proof first for the Dirac Hamiltonian in \secref{sec:Dirac Hamiltonian} and then for the Dirac-Pauli Hamiltonian in \secref{sec:Dirac-Pauli Hamiltonian}.

\subsection{Remarks on the FW transformation}\label{sec:remarks}
The main purpose of this paper is to prove the correspondence between classical and Dirac-Pauli spinors via the FW transformation. We do not intend to settle the disputed issues about the mathematical rigor and legitimacy of the FW transformation but only briefly remark on some of them.

First of all, it should be emphasized again that, for generic settings, the Dirac equation is not self-consistent without second quantization (i.e., quantization in quantum field theory). The inconsistency can be seen from the fact that the Dirac equation gives rise to the Klein paradox (as the Klein-Gordon equation does), rendering the first quantization formalism non-unitary (see Section 5.6 of \cite{Strange2008} for more details). This implies that the \emph{exact} FW transformation does not exist except for some special settings (such as the special cases presented above and those discussed in \cite{Nikitin:1998}), or otherwise it would exactly decouple the particle from the antiparticle and thus remove the Klein paradox without appealing to second quantization. Apart from some special conditions that admit the exact FW transformation, the FW transformation exists \emph{exactly} only in some \emph{formal limit} (i.e., when some \emph{regularization} is properly prescribed) such as the weak-field limit prescribed in \eqnref{weak-field limit}.

In the literature of relativistic quantum mechanics, many \emph{exact}-decoupling methods of the FW transformation have been constructed and used for various applications (e.g., see \cite{Reiher2004a,Reiher2004b}). Rigorously speaking, exactness of these methods should be understood in the sense that some regularization has been prescribed although usually the prescription is not explicitly specified and might seem obscure. That said, existence of the exact FW transformation is often taken for granted before a method is formulated, and only when the method is used in actual applications is some regularization then tacitly prescribed. For example, in the work of \cite{Reiher2004b}, when the Douglas-Kroll-Hess method \cite{Douglas1974,Hess1986} is applied to one-electron atoms, calculations have been performed with an even-tempered universal Gaussian basis set, the employment of which can be viewed as a prescription of regularization imposed to suppress infinitely long-range effects of the Coulomb potential. (Also see \cite{Peng2012} for more discussions on other theoretical aspects of exact-decoupling methods.)

The FW methods can be classified into two types: the one-step (direct) approach and the order-by-order (step-by-step) approach (see \cite{Silenko2013-analysis} for a comparative analysis of these two approaches). Many methods give a closed form of the one-step solution but the closed form so obtained usually remains formal (see \cite{Peng2012} for more comments) except for some special cases (as presented above). In order to reveal the relevant physics, one has to adopt an order-by-order approach in the first place or to further perform order-by-order expansion upon the one-step solution. In the order-by-order approach, it is crucial to know whether the power series converges or not. The issue of convergence has been carefully investigated in \cite{Reiher2004a,Reiher2004b} (also see \cite{Reiher2009 Book} for a detailed review). In the series expansion in terms of $1/c$, the radius of convergence (in the complex plane of momentum space) is finite. In this regard, the expansion in $1/c$ is deemed inadequate on the grounds that it is divergent for large momenta. On the other hand, the series expansion in terms of the scalar potential $\phi$, known as the Douglas-Kroll-Hess method \cite{Douglas1974,Hess1986}, is convergent on a sliced complex plane of momentum space that covers the \emph{whole} real axis. Therefore, the Douglas-Kroll-Hess method is adequate for any value of momenta.

It should be noted that the aforementioned pathology of the expansion in $1/c$ simply means that, at some point when the momentum is large enough, it will stop being a good approximation to the exact FW transformed Hamiltonian if the series expansion is truncated to a \emph{finite} series. This, however, does \emph{not} invalidate the closed-form solution obtained from the \emph{infinite} series \emph{as a whole}. If the whole infinite series converges to a closed form of an analytic function within the radius of convergence, the analytic function can then be extended beyond the radius of convergence via analytic continuation.\footnote{For example, $(1-z)^{-1}$ admits the power series $\sum_{n=0}^\infty z^n$ for $|z|<1$. This does not imply that $(1-z)^{-1}$ is well defined only for $|z|<1$; on the contrary, it is well defined and analytic everywhere in the complex plane except $z=1$.}$^,$\footnote{Also see the last paragraph in \secref{sec:calHFW}, especially \eqnref{Gaussian int}, for a formal implementation of the analytic continuation used for our proof of the quantum-classical correspondence.} Therefore, as long as the closed form of the infinite power series is attainable, the order-by-order method in terms of $1/c$ is as valid as the Douglas-Kroll-Hess method and, furthermore, the closed-form solutions are unique (more precisely, unitarily equivalent to one another) whatever approaches are taken (provided they are regularized in equivalent ways). This is exactly what happens in the rest of this paper for the proof of the correspondence between classical and Dirac-Pauli spinors.\footnote{We could have used the Douglas-Kroll-Hess method for our purpose if it is accordingly modified to incorporate the vector potential $\mathbf{A}$ in addition to the scalar potential $\phi$. If the modification is formulated in a fashion that the series expansion is in terms of $\phi$ and $\mathbf{A}$, then our desired linear-field limit can be readily obtained as the first-order result. However, this modification does not seem straightforward at all. Furthermore, even in the ordinary Douglas-Kroll-Hess method (i.e., in the absence of $\mathbf{A}$), the first-order result cannot be directly compared to the conjectured form \eqnref{calHFW conjecture}, but further series expansion has to be performed. It turns out the Douglas-Kroll-Hess method is less suitable for our purpose and instead we adopt Kutzelnigg's method of DPT.}

To sum up, despite some doubts about the legitimacy of the FW transformation in general and of the approach we adopt in particular, our proof remains sound on account of the two facts: first, regularization is properly prescribed for the weak-field limit as in \eqnref{weak-field limit}; second, the exact solution is obtained in a closed form as in \eqnref{calHFW}.

\section{Dirac Hamiltonian}\label{sec:Dirac Hamiltonian}
For the Dirac theory, we first solve the operator $X$ by the power series expansion and then obtain the FW transformed Hamiltonian $H_\mathrm{FW}$. As we assume the applied electromagnetic fields to be static and homogeneous, we have $[\pi_i,E_j]=[\pi_i,B_j]=0$. Moreover, because we focus on the weak-field limit, we neglect all the terms nonlinear in $F_{\mu\nu}$.

\subsection{Operators $X_n$}
The operator $X$ used in Kutzelnigg's method of DPT satisfies the condition \eqnref{condition X} for the Dirac theory.
Consider the power series of $X$ in powers of $c^{-1}$:
\begin{equation}
X=\sum_{j=1}^\infty\frac{X_j}{c^j}.
\end{equation}
For the orders of $1/c$ and $1/c^2$, \eqnref{condition X} yields
\begin{subequations}
\begin{eqnarray}
2mX_1&=&\sdotp,\\
\label{X2}
2mX_{2}&=&0.
\end{eqnarray}
\end{subequations}
According to \eqnref{condition X}, the higher-order terms in the power series of $X$ can be determined by the following recursion relations (for $j\geq1$):
\begin{subequations}\label{recursion Xj}
\begin{eqnarray}
\label{recursion a}
2mX_{2j}&=&-\sum_{k_1+k_2=2j-1}X_{k_1}\sdotp X_{k_2}+q[\phi,X_{2j-2}],\\
\label{recursion b}
2mX_{2j+1}&=&-\sum_{k_1+k_2=2j}X_{k_1}\sdotp X_{k_2}+q[\phi,X_{2j-1}].
\end{eqnarray}
\end{subequations}
Explicitly, the leading terms $X_j$ read as
\begin{subequations}\label{leading Xj}
\begin{eqnarray}
\label{X1}
X_1&=&\frac{\sdotp}{2m},\\
\label{X3}
X_3&=&-\frac{1}{8}\frac{(\sdotp)^3}{m^3} -\frac{1}{4}\frac{iq\hbar}{m^2}\,\sdotE,\\
X_5&=&\frac{1}{16}\frac{(\sdotp)^5}{m^5} +\frac{3}{16}\frac{iq\hbar}{m^4}\,\boldpi^2(\sdotE) +\frac{1}{8}\frac{iq\hbar}{m^4}\,(\sdotp)(\Edotp),\\
X_7&=&-\frac{5}{128}\frac{(\sdotp)^7}{m^7} -\frac{5}{32}\frac{iq\hbar}{m^6}\,\boldpi^4(\sdotE) -\frac{3}{16}\frac{iq\hbar}{m^6}\,\boldpi^2(\sdotp)(\Edotp),\\
X_9&=&\frac{7}{256}\frac{(\sdotp)^9}{m^9} +\frac{35}{256}\frac{iq\hbar}{m^8}\,\boldpi^6(\sdotE) +\frac{29}{128}\frac{iq\hbar}{m^8}\,\boldpi^4(\sdotp)(\Edotp),\\
X_{11}&=&-\frac{21}{1024}\frac{(\sdotp)^{11}}{m^{11}} -\frac{63}{1024}\frac{iq\hbar}{m^{10}}\,\boldpi^8(\sdotE) -\frac{65}{256}\frac{iq\hbar}{m^{10}}\boldpi^6\,(\sdotp)(\Edotp),\\
X_{13}&=&\frac{33}{2048}\frac{(\sdotp)^{13}}{m^{13}} +\frac{231}{2048}\frac{iq\hbar}{m^{12}}\,\boldpi^{10}(\sdotE) +\frac{281}{1024}\frac{iq\hbar}{m^{12}}\,\boldpi^{8}(\sdotp)(\Edotp),
\end{eqnarray}
\end{subequations}
and $X_{2j}=0$ for all $j$. (These were laboriously calculated in \cite{Chen2013}.)

Based on the result of \eqnref{leading Xj}, we can conjecture the following theorem and provide its proof by mathematical induction.
\begin{theorem}\label{thm:1}
In the weak-field limit, we neglect nonlinear terms in $\mathbf{E}$ and $\mathbf{B}$. If the electromagnetic field is homogeneous (thus, $[\pi_i,E_j]=[\pi_i,B_j]=0$), the generic expression for $X_{n\geq0}$ is given by
\begin{subequations}\label{X2j and X2j+1}
\begin{eqnarray}
\label{X2j}
X_{2j}&=&0,\\
\label{X2j+1}
X_{2j+1}&=&a_j\frac{(-1)^j}{(2m)^{2j+1}}(\sdotp)^{2j+1}
+b_j\frac{iq\hbar(-1)^j}{(2m)^{2j}}\,\boldpi^{2j-2}(\sdotE)\nonumber\\
&&\mbox{}
+c_j\frac{iq\hbar(-1)^j}{(2m)^{2j}}\,\boldpi^{2j-4}(\sdotp)(\Edotp),
\end{eqnarray}
\end{subequations}
where the coefficients are defined as
\begin{subequations}
\begin{eqnarray}
\label{aj}
a_{j\geq0} &=& \frac{(2j)!}{j!(j+1)!},\\
\label{bj}
b_{j\ge1} &=& \frac{(2j-1)!}{j!(j-1)!} \equiv (2j-1)a_{j-1}, \qquad b_{j=0}=0,\\
\label{cj}
c_{j\geq0}&=&2\sum_{j_1+j_2=j}b_{j_1}b_{j_2}, \qquad (\text{particularly},\ c_{j=0,1}=0).
\end{eqnarray}
\end{subequations}
\end{theorem}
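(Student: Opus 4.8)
The plan is to prove the two assertions \eqnref{X2j} and \eqnref{X2j+1} simultaneously by strong induction on $n$, feeding the Kutzelnigg recursion \eqnref{recursion Xj} with the inductive ansatz and then matching the three operator structures $(\sdotp)^{2j+1}$, $\boldpi^{2j-2}(\sdotE)$ and $\boldpi^{2j-4}(\sdotp)(\Edotp)$ separately. The base cases $X_1=\sdotp/(2m)$ and $X_3=-(\sdotp)^3/(8m^3)-iq\hbar\,\sdotE/(4m^2)$ are read off from \eqnref{leading Xj}; they are consistent with \eqnref{X2j+1} because $a_0=b_1=1$ and $b_0=c_0=c_1=0$, so that the formally ill-defined powers $\boldpi^{-2},\boldpi^{-4}$ occurring at $j=0,1$ are harmless, being multiplied by vanishing coefficients. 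The vanishing statement \eqnref{X2j} is the easy half of the step: in \eqnref{recursion a} each summand of $\sum_{k_1+k_2=2j-1}X_{k_1}\sdotp X_{k_2}$ carries a factor $X_{k}$ with $k$ even and $k<2j$, which is zero by the inductive hypothesis, and $q[\phi,X_{2j-2}]$ vanishes for the same reason; hence $X_{2j}=0$, and in \eqnref{recursion b} only odd-indexed terms survive, so that $2mX_{2j+1}=-\sum_{\ell_1+\ell_2=j-1}X_{2\ell_1+1}\sdotp X_{2\ell_2+1}+q[\phi,X_{2j-1}]$.

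For the nontrivial half I would insert the inductive ansatz $X_{2\ell+1}=X^{(a)}_{2\ell+1}+X^{(b)}_{2\ell+1}+X^{(c)}_{2\ell+1}$ (the three terms of \eqnref{X2j+1}) into this recursion and discard every term of second or higher order in $\mathbf{E},\mathbf{B}$. Because the electric potential of a homogeneous field is $\phi=-\mathbf{E}\cdot\mathbf{x}$, the commutator reduces to $q[\phi,X^{(a)}_{2j-1}]$ up to $\bigO(\mathbf{E}^2)$; and because $(\sdotp)^2-\boldpi^2=-(q\hbar/c)\sdotB$, one may replace $(\sdotp)^{2\ell}$ by $\boldpi^{2\ell}$ wherever it multiplies an already field-linear factor. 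The surviving contributions are then: the $(a,a)$ product, which reproduces $(\sdotp)^{2j+1}$ exactly; the $(a,b)$ and $(b,a)$ products, which both collapse to $\boldpi^{2j-2}(\sdotE)$; the $(a,c)$ and $(c,a)$ products, which both collapse to $\boldpi^{2j-4}(\sdotp)(\Edotp)$; and the commutator, which through $[x_k,\sdotp]=i\hbar\sigma_k$ becomes $q[\phi,X^{(a)}_{2j-1}]\propto\sum_{a=0}^{2j-2}(\sdotp)^a(\sdotE)(\sdotp)^{2j-2-a}$. In particular the ansatz is closed under the operations appearing in the recursion, so no new structure is generated. That last sum is reduced by applying term by term the Pauli identity $(\sdotp)(\sdotE)(\sdotp)\simeq 2(\sdotp)(\Edotp)-\boldpi^2(\sdotE)$, valid modulo $\mathbf{B}$, and by counting the $j$ even versus $j-1$ odd values of $a$; the result is $\boldpi^{2j-2}(\sdotE)+2(j-1)\,\boldpi^{2j-4}(\sdotp)(\Edotp)$, so the commutator feeds both the $b$- and the $c$-channels.

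Collecting the coefficients channel by channel then reduces the claim to three purely numerical identities: (i) $a_j=\sum_{\ell_1+\ell_2=j-1}a_{\ell_1}a_{\ell_2}$, which is exactly the Catalan recursion obeyed by \eqnref{aj}; (ii) $b_j=2\sum_{\ell_1+\ell_2=j-1}a_{\ell_1}b_{\ell_2}+a_{j-1}$, which I would settle by inserting $b_\ell=(2\ell-1)a_{\ell-1}$ and symmetrizing the resulting convolution by reflecting its summation index and averaging, once again collapsing it to the Catalan recursion and giving $\sum_{\ell_1+\ell_2=j-1}a_{\ell_1}b_{\ell_2}=(j-1)a_{j-1}$, hence $b_j=(2j-1)a_{j-1}$; and (iii) $c_j=2\sum_{\ell_1+\ell_2=j-1}a_{\ell_1}c_{\ell_2}+2(j-1)a_{j-1}$, which must be reconciled with the closed form \eqnref{cj}.

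I expect (iii) to be the main obstacle, since it is a nested (double) convolution rather than an ordinary one. The cleanest route I see is to pass to the ordinary generating functions $\hat a(x)=\sum_j a_j x^j$, $\hat b(x)=\sum_j b_j x^j$, $\hat c(x)=\sum_j c_j x^j$: from $\hat a=1+x\hat a^2$ one obtains $\hat a'=\hat a^2/(1-2x\hat a)$ and, using $b_j=(2j-1)a_{j-1}$, the closed form $\hat b=x\hat a/(1-2x\hat a)$; the definition \eqnref{cj} reads $\hat c=2\hat b^2$; and the induction-derived recursion (iii) is equivalent to $\hat c=2x\hat a\,\hat c+2x^2\hat a'$. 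Substituting the expressions for $\hat a'$ and $\hat b$ verifies $2\hat b^2=2x\hat a\,(2\hat b^2)+2x^2\hat a'$ as an identity of formal power series, which establishes (iii). With (i)--(iii) in hand the induction closes and \thmref{thm:1} follows. A secondary point that needs care throughout is the operator-ordering bookkeeping, namely checking that the commutators $[\boldpi^2,\sdotE]$, $[\boldpi^2,\sdotp]$, $[\boldpi^2,\Edotp]$ and the reordering of $(\sdotp)(\Edotp)$ versus $(\Edotp)(\sdotp)$ are all of order $\mathbf{E}\mathbf{B}$ or $\mathbf{B}^2$ and hence negligible; this follows at once from $[\pi_i,\pi_j]\propto\mathbf{B}$ together with the standard Pauli-matrix identities.
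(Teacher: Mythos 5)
Your proposal is correct and follows essentially the same route as the paper: induction on the recursion \eqnref{recursion Xj} with $X_{2j}=0$, closure of the three operator structures under the product and the commutator using $[\pi_i,E_j]=[\pi_i,B_j]=0$ together with \eqnref{identity 1}--\eqnref{identity 4}, and reduction to exactly the three convolution identities \eqnref{combo identities}. The only real difference is how those identities are verified: you use the Catalan functional equation $\hat{a}=1+x\hat{a}^2$ (with a symmetrization trick for the $a\ast b$ convolution and a formal power-series check for the $a\ast c$ one), whereas the paper obtains them by squaring and multiplying the explicit Taylor series \eqnref{series abc} of $x/(1+\sqrt{1+x^2})$ and $1/\sqrt{1+x^2}$ --- equivalent generating-function arguments.
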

\begin{proof}[Proof (by induction)]
It is trivial to prove \eqnref{X2j} by applying \eqnref{recursion a} on \eqnref{X2} inductively.
To prove \eqnref{X2j+1}, we first note that it is valid for $j=1$ by \eqnref{X3}. Suppose \eqnref{X2j+1} is true for all $X_{2k+1}$ with $k<j$. Since $X_{2k}=0$, the recursive relation \eqnref{recursion b} reads as
\begin{equation}
2mX_{2j+1} = -\sum_{j_1+j_2=j-1}X_{2j_1+1}(\sdotp)X_{2j_2+1} +q[\phi,\,X_{2j-1}],
\end{equation}
which, by applying the inductive hypothesis for $k<j$, yields
\begin{subequations}\label{pf 1}
\begin{eqnarray}
\label{pf 1a}
2mX_{2j+1} &=& -\sum_{j_1+j_2=j-1}X_{2j_1+1}(\sdotp)X_{2j_2+1} +q\left[\phi,\,a_{j-1}\frac{(-1)^{j-1}}{(2m)^{2j-1}}(\sdotp)^{2j-1}\right]\\
&=& -\sum_{j_1+j_2=j-1}a_{j_1}a_{j_2}\frac{(-1)^{j_1+j_2}}{(2m)^{2(j_1+j_2)+2}}(\sdotp)^{2(j_1+j_2)+3}\nonumber\\
\label{pf 1b}
&&\mbox{}
-2iq\hbar\sum_{j_1+j_2=j-1}a_{j_1}b_{j_2}\frac{(-1)^{j_1+j_2}}{(2m)^{2(j_1+j_2)+1}} (\sdotp)^{2j_1+2}\boldpi^{2j_2-2}(\sdotE)\nonumber\\
&&\mbox{}
-2iq\hbar\sum_{j_1+j_2=j-1}a_{j_1}c_{j_2}\frac{(-1)^{j_1+j_2}}{(2m)^{2(j_1+j_2)+1}} (\sdotp)^{2j_1+2}\boldpi^{2j_2-4}(\sdotp)(\Edotp)\nonumber\\
&&\mbox{}
+q\,a_{j-1}\frac{(-1)^{j-1}}{(2m)^{2j-1}}\left[\phi,\,(\sdotp)^{2j-1}\right],
\end{eqnarray}
\end{subequations}
where in \eqnref{pf 1a} we have neglected nonlinear terms in $\mathbf{E}$ and in \eqnref{pf 1b} adopted $[\pi_i,E_j]=0$.
Next, applying \eqnref{identity 3} and \eqnref{identity 4b} and dropping out the second term in \eqnref{identity 3} whenever it is accompanied by $\mathbf{E}$, we then have
\begin{eqnarray}
2mX_{2j+1} &=&
-\sum_{j_1+j_2=j-1}a_{j_1}a_{j_2}\frac{(-1)^{j_1+j_2}}{(2m)^{2(j_1+j_2)+2}}(\sdotp)^{2(j_1+j_2)+3}\nonumber\\
&&\mbox{}
-2iq\hbar\sum_{j_1+j_2=j-1}a_{j_1}b_{j_2}\frac{(-1)^{j_1+j_2}}{(2m)^{2(j_1+j_2)+1}} \boldpi^{2(j_1+j_2)}(\sdotE)\nonumber\\
&&\mbox{}
-2iq\hbar\sum_{j_1+j_2=j-1}a_{j_1}c_{j_2}\frac{(-1)^{j_1+j_2}}{(2m)^{2(j_1+j_2)+1}} \boldpi^{2(j_1+j_2)-2}(\sdotp)(\Edotp)\nonumber\\
&&\mbox{}
-iq\hbar\,a_{j-1}\frac{(-1)^{j-1}}{(2m)^{2j-1}}\,\boldpi^{2j-2}(\sdotE)\nonumber\\
&&\mbox{}
-2iq\hbar\,a_{j-1}(j-1)\frac{(-1)^{j-1}}{(2m)^{2j-1}}\,\boldpi^{2j-4}(\sdotp)(\Edotp).
\end{eqnarray}
Consequently, we have
\begin{eqnarray}
X_{2j+1} &=&
\left(\,\sum_{j_1+j_2=j-1}a_{j_1}a_{j_2}\right)\frac{(-1)^{j}}{(2m)^{2j+1}}(\sdotp)^{2j+1}\nonumber\\
&&\mbox{}
+iq\hbar\left(2\sum_{j_1+j_2=j-1}a_{j_1}b_{j_2}+a_{j-1}\right)\frac{(-1)^{j}}{(2m)^{2j}} \boldpi^{2j-2}(\sdotE)\nonumber\\
&&\mbox{}
+iq\hbar\left(2\sum_{j_1+j_2=j-1}a_{j_1}c_{j_2}+2(j-1)a_{j-1}\right)\frac{(-1)^{j}}{(2m)^{2j}} \boldpi^{2j-4}(\sdotp)(\Edotp),
\end{eqnarray}
which can be shown to take the form of \eqnref{X2j+1} by the combinatorial identities (their proofs will be provided shortly):
\begin{subequations}\label{combo identities}
\begin{eqnarray}
\label{combo identity a}
\text{for}\ j\geq1:\qquad
\sum_{j_1+j_2=j-1}a_{j_1}a_{j_2}&=&a_j,\\
\label{combo identity b}
2\sum_{j_1+j_2=j-1}a_{j_1}b_{j_2}&=&b_j-a_{j-1}\equiv2(j-1)a_{j-1},\\
\label{combo identity c}
2\sum_{j_1+j_2=j-1}a_{j_1}c_{j_2}&\equiv&4\sum_{j_1+j_2+j_3=j-1}a_{j_1}b_{j_2}b_{j_3}\nonumber\\
=c_j-b_j+a_j&\equiv& c_j-2(j-1)a_{j-1}.
\end{eqnarray}
\end{subequations}
We therefore have proved the theorem by mathematical induction.
\end{proof}

\subsection{Operators $X$ and $X^\dag$}
We have the Taylor series with the radius of convergence $|x|<1$:
\begin{subequations}\label{series abc}
\begin{eqnarray}
\label{series a}
\sum_{j=0}^\infty a_j\frac{(-1)^j}{2^{2j+1}}\,x^{2j+1}
&=&
\frac{x}{1+\sqrt{1+x^2}}\equiv x^{-1}\left(\sqrt{1+x^2}-1\right),\\
\label{series b}
\sum_{j=1}^\infty b_j\frac{(-1)^j}{2^{2j}}\,x^{2j-2}
&=&
\frac{1}{2}\left(\frac{1}{1+\sqrt{1+x^2}}-\frac{1}{\sqrt{1+x^2}}\right),\\
\label{series c}
\sum_{j=2}^\infty c_j\frac{(-1)^j}{2^{2j}}\,x^{2j-4}
&=&
\frac{1}{8}\left(\frac{1}{1+\sqrt{1+x^2}}-\frac{1}{\sqrt{1+x^2}}\right)^2,
\end{eqnarray}
\end{subequations}
where \eqnref{series a} and \eqnref{series b} are obtained by the binomial series: $(1+x)^{\pm1/2}=\sum_{n=0}^\infty{\pm1/2 \choose n}x^n$.\footnote{Conversely, we have
\begin{subequations}
\begin{eqnarray}
\label{binomial series a}
\sqrt{1+x^2}&=&1+\sum_{j=0}^\infty a_j\frac{(-1)^jx^{2(j+1)}}{2^{2j+1}},\\
\label{binomial series b}
\frac{1}{\sqrt{1+x^2}}&=&\sum_{j=0}^\infty (a_j+b_{j+1})\frac{(-1)^jx^{2j}}{2^{2j+1}}
=\sum_{j=0}^\infty (j+1)a_j\frac{(-1)^jx^{2j}}{2^{2j}}.
\end{eqnarray}
\end{subequations}}
Meanwhile, with $c_j$ defined by \eqnref{cj}, taking squares on both sides of \eqnref{series b} immediately yields \eqnref{series c}.

The combinatorial identities \eqnref{combo identities} can be proven by the above Taylor series. Taking squares on both sides of \eqnref{series a} gives
\begin{eqnarray}
&& \sum_{j_1,j_2=0}^\infty a_{j_1}a_{j_2}\frac{(-1)^{j_1+j_2}}{2^{2(j_1+j_2)+2}}\,x^{2(j_1+j_2)+1}
=\sum_{j=0}^\infty\sum_{j_1+j_2=j}a_{j_1}a_{j_2}\frac{(-1)^j}{2^{2j+2}}\,x^{2j+1}\nonumber\\
&=&\sum_{j=1}^\infty\sum_{j_1+j_2=j-1}a_{j_1}a_{j_2}\frac{(-1)^{j-1}}{2^{2j+2}}\,x^{2j+2}
=1-\frac{2}{x}\sum_{j=0}^\infty\sum_{j_1+j_2=j-1}a_{j_1}a_{j_2}\frac{(-1)^j}{2^{2j+1}}\,x^{2j+1}\nonumber\\
&=&\left(\frac{x}{1+\sqrt{1+x^2}}\right)^2,
\end{eqnarray}
which leads to
\begin{eqnarray}
&&\sum_{j=0}^\infty\sum_{j_1+j_2=j-1}a_{j_1}a_{j_2}\frac{(-1)^j}{2^{2j+1}}\,x^{2j+1}\nonumber\\
&=&-\frac{x}{2}\left(\left(\frac{x}{1+\sqrt{1+x^2}}\right)^2-1\right)
=\frac{x}{1+\sqrt{1+x^2}}.
\end{eqnarray}
By \eqnref{series a} again, we obtain \eqnref{combo identity a}. The identity \eqnref{combo identity b} can be proved similarly, and \eqnref{combo identity c} follows immediately from \eqnref{combo identity b} with the definition \eqnref{cj}.
Additionally, exploiting \eqnref{series abc} in a similar way enables us to prove one more combinatorial identity:
\begin{equation}\label{combo identity d}
\text{for}\ j\geq0:\quad b_{j+1}+c_{j+1}=4b_j+4c_j+a_j,
\end{equation}
which will be useful later.

By \eqnref{X2j and X2j+1}, we obtain the Taylor series of the $X$ operator:
\begin{eqnarray}\label{series X}
X&=&\sum_{k=1}^\infty\frac{X_k}{c^k}=\sum_{j=0}^\infty\frac{X^{2j+1}}{c^{2j+1}}\nonumber\\
&=&\sum_{j=0}^\infty a_j\frac{(-1)^j}{(2mc)^{2j+1}}\,(\sdotp)^{2j+1}
+\frac{iq\hbar}{c}\sum_{j=1}^\infty b_j\frac{(-1)^j}{(2mc)^{2j}}\,\boldpi^{2j-2}(\sdotE)\nonumber\\
&&\mbox{}+\frac{iq\hbar}{c}\sum_{j=2}^\infty c_j\frac{(-1)^j}{(2mc)^{2j}}\,\boldpi^{2j-4} (\sdotp)(\Edotp).
\end{eqnarray}
Adopting $[\pi_i,E_j]=0$, we have
\begin{eqnarray}\label{series Xdagger}
X^\dag&=&
\sum_{j=0}^\infty a_j\frac{(-1)^j}{(2mc)^{2j+1}}\,(\sdotp)^{2j+1}
-\frac{iq\hbar}{c}\sum_{j=1}^\infty b_j\frac{(-1)^j}{(2mc)^{2j}}\,\boldpi^{2j-2}(\sdotE)\nonumber\\
&&\mbox{}-\frac{iq\hbar}{c}\sum_{j=2}^\infty c_j\frac{(-1)^j}{(2mc)^{2j}}\,\boldpi^{2j-4} (\sdotp)(\Edotp).
\end{eqnarray}
By \eqnref{series abc}, the Taylor series of the operator $X$ given in \eqnref{series X} converges to a closed form provided that
\begin{equation}\label{convergence condition}
\left|(\sdotp)^2\right|=\left|\boldpi^2-\frac{q\hbar}{c}\sdotB\right|<m^2c^2.
\end{equation}
We will discuss the condition for convergence in the end of \secref{sec:calHFW}.

Adopting $[\pi_i,E_j]=0$ again and neglecting nonlinear terms in $\mathbf{E}$, \eqnref{series X} and \eqnref{series Xdagger} then give
\begin{eqnarray}\label{XdaggerX}
X^\dag X &=& \sum_{j_1,j_2=0}^\infty a_{j_1}a_{j_2}\frac{(-1)^{j+j_2}}{(2mc)^{2(j_1+j_2)+2}}\,(\sdotp)^{2(j_1+j_2)+2}\nonumber\\
&&\mbox{}
+\frac{iq\hbar}{c}\sum_{j_1=0,j_2=1}^\infty a_{j_1}b_{j_2}\frac{(-1)^{j_1+j_2}}{(2mc)^{2(j_1+j_2)+1}}\,\boldpi^{2(j_1+j_2)-2} \left[\sdotp,\sdotE\right]\nonumber\\
&=&\sum_{j=0}^\infty\sum_{j_1+j_2=j}a_{j_1}a_{j_2}
\frac{(-1)^{j}}{(2mc)^{2j+2}}\,(\sdotp)^{2j+2}\nonumber\\
&&\mbox{}
+2\,\frac{q\hbar}{c}\sum_{j=1}^\infty\sum_{j_1+j_2=j}^\infty a_{j_1}b_{j_2}\frac{(-1)^{j}}{(2mc)^{2j+1}}\,\boldpi^{2j-2} (\Etimesp)\cdot\boldsigma\nonumber\\
&=& \sum_{j=0}^\infty a_{j+1}\frac{(-1)^j}{(2mc)^{2j+2}}\,(\sdotp)^{2j+2}\nonumber\\
&&\mbox{}
+\frac{q\hbar}{c}\sum_{j=1}^\infty (b_{j+1}-a_j) \frac{(-1)^{j}}{(2mc)^{2j+1}}\,\boldpi^{2j-2} (\Etimesp)\cdot\boldsigma,
\end{eqnarray}
where \eqnref{identity 1}, \eqnref{identity 3}, and \eqnref{combo identities} have been used.

\subsection{Operator $H_\mathrm{FW}$}

Before we calculate $H_\mathrm{FW}$, let us investigate the operators $[q\phi,(X^\dag X)]$ and $[c\,\sdotp X,(X^\dag X)^n]$ beforehand.

First, by \eqnref{XdaggerX} and \eqnref{identity 4a}, we have
\begin{eqnarray}\label{qphi XdaggerX}
[q\phi,X^\dag X] &=&
\sum_{j=0}^\infty a_{j+1}\frac{(-1)^j}{(2mc)^{2j+2}}
\left[q\phi,(\sdotp)^{2j+2}\right]\nonumber\\
&=&iq\hbar\sum_{j=0}^\infty 2(j+1)a_{j+1}\frac{(-1)^j}{(2mc)^{2j+2}}\,
\boldpi^{2j}(\Edotp).
\end{eqnarray}
Note that $[X^\dag X,\boldpi^{2j}(\Edotp)]=0$ if we neglect nonlinear terms in $F_{\mu\nu}$ and adopt $[\pi_i,E_j]=0$. Consequently, by induction, we have
\begin{equation}
[q\phi,(X^\dag X)^n]
=n[q\phi,X^\dag X](X^\dag X)^{n-1},
\end{equation}
for $n\geq1$. Expanding $(1+x)^{1/2} = \sum_{n=0}^\infty {1/2 \choose n}x^n \equiv \sum_{n=0}^\infty e_n x^n$, we can then compute
\begin{eqnarray}\label{HFW lemma 1}
&&\sqrt{1+X^\dag X}\,(q\phi) \equiv \sum_{n=0}^\infty e_n (X^\dag X)^n(q\phi)\nonumber\\
&=& \sum_{n=0}^\infty e_n (q\phi)(X^\dag X)^n -\sum_{n=1}^\infty ne_n[q\phi,X^\dag X](X^\dag X)^{n-1}\nonumber\\
&=& (q\phi)\sqrt{1+X^\dag X}\, -[q\phi,X^\dag X]\frac{1}{2\sqrt{(1+X^\dag X)}},
\end{eqnarray}
where we have used $\frac{d}{dx}(1+x)^{1/2} =\frac{1}{2}(1+x)^{-1/2} =\sum_{n=1}^\infty ne_n x^{n-1}$.

Second, from \eqnref{series X}, we get
\begin{eqnarray}\label{h0X}
c(\sdotp)X
&=&c\sum_{j=0}^\infty a_j\frac{(-1)^j}{(2mc)^{2j+1}}\,(\sdotp)^{2j+2}
+q\hbar\sum_{j=1}^\infty b_j\frac{(-1)^j}{(2mc)^{2j}}\,\boldpi^{2j-2} (\Etimesp)\cdot\boldsigma\nonumber\\
&&\mbox{}+iq\hbar\sum_{j=1}^\infty (b_j+c_j)\frac{(-1)^j}{(2mc)^{2j}}\,\boldpi^{2j-2} (\Edotp),
\end{eqnarray}
where \eqnref{identity 1} and \eqnref{identity 3} have been used and the superfluous term involving $c_{j=1}=0$ is added for bookkeeping convenience.
Note that, up to the linear terms in $F_{\mu\nu}$, the $\sdotB$ piece of \eqnref{identity 3} can be dropped out for the factors $(\sdotp)^{2j+2}$ in both \eqnref{XdaggerX} and \eqnref{h0X} when we compute $\left[c\,\sdotp X, X^\dag X\right]$. Consequently we have
\begin{equation}\label{HFW lemma 2}
\left[c\,\sdotp X, X^\dag X\right] =0.
\end{equation}

We are now ready to calculate $H_\mathrm{FW}$. With \eqnref{HFW lemma 1} and \eqnref{HFW lemma 2}, \eqnref{HFW Kutzelnigg} leads to
\begin{subequations}\label{HFW 1}
\begin{eqnarray}
\label{HFW 1a}
H_\mathrm{FW} &=& mc^2 + \sqrt{1+X^\dag X}\,(q\phi+c\,\sdotp X)\frac{1}{\sqrt{1+X^\dag X}}\\
&=& mc^2+q\phi -[q\phi,X^\dag X]\frac{1}{2(1+X^\dag X)} +c\,\sdotp X.
\end{eqnarray}
\end{subequations}
Substituting \eqnref{qphi XdaggerX} and \eqnref{h0X} into \eqnref{HFW 1} gives
\begin{eqnarray}\label{HFW 2}
H_\mathrm{FW} &=& mc^2 + q\phi +c\sum_{j=0}^\infty a_j\frac{(-1)^j}{(2mc)^{2j+1}}\,(\sdotp)^{2j+2}
+q\hbar\sum_{j=1}^\infty b_j\frac{(-1)^j}{(2mc)^{2j}}\,\boldpi^{2j-2} (\Etimesp)\cdot\boldsigma\nonumber\\
&&\mbox{}+iq\hbar\sum_{j=1}^\infty (b_j+c_j)\frac{(-1)^j}{(2mc)^{2j}}\,\boldpi^{2j-2} (\Edotp)\nonumber\\
&&\mbox{}
-iq\hbar\left(\sum_{j=0}^\infty (j+1)a_{j+1}\frac{(-1)^j}{(2mc)^{2j+2}}\,
\boldpi^{2j}(\Edotp)\right)\frac{1}{1+X^\dag X}.
\end{eqnarray}
Because $H_\mathrm{FW}$ is hermitian, the last two terms in \eqnref{HFW 2}, which give the anti-hermitian part, are expected to cancel each other exactly. This can be seen explicitly by checking vanishing of the following composition of operators:
\begin{eqnarray}\label{HFW antihermitian}
&&\left(\sum_{j=1}^\infty (b_j+c_j)\frac{(-1)^j}{(2mc)^{2j}}\,\boldpi^{2j-2}\right)(1+X^\dag X)
+\sum_{j=0}^\infty (j+1)a_{j+1}\frac{(-1)^j}{(2mc)^{2j+2}}\,\boldpi^{2j}\nonumber\\
\label{check a}
&=&\left(\sum_{j=1}^\infty (b_j+c_j)\frac{(-1)^j}{(2mc)^{2j}}\,\boldpi^{2j-2}\right)
\left(1+\sum_{j=0}^\infty a_{j+1}\frac{(-1)^j}{(2mc)^{2j+2}}\,\boldpi^{2j+2}\right)\nonumber\\
&&\quad
-\sum_{j=1}^\infty j a_j \frac{(-1)^j}{(2mc)^{2j}}\,\boldpi^{2j-2}\nonumber\\
&=&\sum_{j=1}^\infty (b_j+c_j)\frac{(-1)^j}{(2mc)^{2j}}\,\boldpi^{2j-2}
+\sum_{j_1,j_2=1}^\infty (a_{j_1}b_{j_2}+a_{j_1}c_{j_2}) \frac{(-1)^{j_1+j_2+1}}{(2mc)^{2(j_1+j_2)}}\,\boldpi^{2(j_1+j_2)-2}\nonumber\\
&&\quad
-\sum_{j=1}^\infty j a_j \frac{(-1)^j}{(2mc)^{2j}}\,\boldpi^{2j-2}\nonumber\\
&=&\sum_{j=1}^\infty (b_j+c_j)\frac{(-1)^j}{(2mc)^{2j}}\,\boldpi^{2j-2}
+\sum_{j=2}^\infty\sum_{j_1+j_2=j \atop j_1,j_2\neq0}^\infty (a_{j_1}b_{j_2}+a_{j_1}c_{j_2}) \frac{(-1)^{j+1}}{(2mc)^{2j}}\,\boldpi^{2j-2}\nonumber\\
&&\quad
-\sum_{j=1}^\infty j a_j \frac{(-1)^j}{(2mc)^{2j}}\,\boldpi^{2j-2}\nonumber\\
&=&\frac{a_1-b_1-c_1}{(2mc)^2}
+\sum_{j=2}^\infty \left(b_j+c_j-ja_j -\sum_{j_1+j_2=j \atop j_1,j_2\neq0}^\infty (a_{j_1}b_{j_2}+a_{j_1}c_{j_2})\right)\frac{(-1)^j}{(2mc)^{2j}}\,\boldpi^{2j-2},
\end{eqnarray}
where in the second line we have dropped out the $\sdotB$ piece of \eqnref{identity 3} for the factors $(\sdotp)^{2j+2}$ in \eqnref{XdaggerX}.
For each coefficient factor of the summand, we have
\begin{eqnarray}
&&b_j+c_j-ja_j -\sum_{j_1+j_2=j \atop j_1,j_2\neq0}^\infty (a_{j_1}b_{j_2}+a_{j_1}c_{j_2})
\equiv2b_j+2c_j-ja_j -\sum_{j_1+j_2=j}^\infty (a_{j_1}b_{j_2}+a_{j_1}c_{j_2})\nonumber\\
&=&2b_j+2c_j-\frac{1}{2}\left(b_{j+1}+c_{j+1}-a_j\right)
\end{eqnarray}
by \eqnref{combo identities}, and it vanishes identically by \eqnref{combo identity d}. Also note that $a_1-b_1-c_1=0$. We thus show that \eqnref{HFW antihermitian} vanishes, thereby affirming hermiticity of $H_\mathrm{FW}$.

As the antihermitian part vanishes, \eqnref{HFW 2} leads to
\begin{eqnarray}\label{HFW 3}
H_\mathrm{FW}
 &=& mc^2 + q\phi +c\sum_{j=0}^\infty a_j\frac{(-1)^j}{(2mc)^{2j+1}}\, (\sdotp)^{2j+2}
 +q\hbar\sum_{j=1}^\infty b_j\frac{(-1)^j}{(2mc)^{2j}}\,\boldpi^{2j-2}
 (\Etimesp)\cdot\boldsigma\nonumber\\
&=&mc^2 + q\phi +mc^2\sum_{j=0}^\infty a_j\frac{(-1)^j}{2^{2j+1}}\left(\frac{\sdotp}{mc}\right)^{2j+2}
+\frac{q\hbar}{(mc)^2} \sum_{j=1}^\infty b_j\frac{(-1)^j}{2^{2j}}\left(\frac{\boldpi}{mc}\right)^{2j-2} (\Etimesp)\cdot\boldsigma\nonumber\\
&=& mc^2+q\phi +mc^2\left(\sqrt{1+\left(\frac{\sdotp}{mc}\right)^2}-1\right)\nonumber\\
&&\mbox{}
+\frac{q\hbar}{2(mc)^2} \left(\frac{1}{1+\sqrt{1+\left(\frac{\boldpi}{mc}\right)^2}}
-\frac{1}{\sqrt{1+\left(\frac{\boldpi}{mc}\right)^2}}\right)
\boldsigma\cdot(\Etimesp),
\end{eqnarray}
where the Taylor series \eqnref{series a} and \eqnref{series b} are used. Note that, up to the linear order in $\mathbf{B}$, we have
\begin{eqnarray}
&&\sqrt{1+\left(\frac{\sdotp}{mc}\right)^2}
=\sqrt{1+\left(\frac{\boldpi}{mc}\right)^2-\frac{q\hbar}{m^2c^3}\sdotB}\nonumber\\
&=&\sqrt{1+\left(\frac{\boldpi}{mc}\right)^2} \left(1-\frac{1}{2}\frac{q\hbar}{m^2c^3} \frac{\sdotB}{\left(1+\left(\frac{\boldpi}{mc}\right)^2\right)}+\cdots\right).
\end{eqnarray}
Taking this back into \eqnref{HFW 3}, we obtain
\begin{eqnarray}\label{HFW 4}
H_\mathrm{FW} &=& q\phi +\sqrt{m^2c^4+c^2\boldpi^2} -\frac{q\hbar}{2mc}\frac{1}{\gamma_{\boldpi}}\sdotB\nonumber\\
&&\quad
\mbox{}+\frac{q\hbar}{2mc}\left(\frac{1}{\gamma_{\boldpi}}-\frac{1}{1+\gamma_{\boldpi}}\right) \boldsigma\cdot\left(\frac{\boldpi}{mc}\times\mathbf{E}\right),
\end{eqnarray}
where the Lorentz factor associated with the kinematic momentum $\boldpi$ is defined as
\begin{equation}\label{gamma pi}
\gamma_{\boldpi}:=\sqrt{1+\left(\frac{\boldpi}{mc}\right)^2}
\equiv \sum_{n=0}^\infty {1/2 \choose n}\left(\frac{\boldpi}{mc}\right)^{2n}
\end{equation}
in accordance with the classical counterpart \eqnref{gamma pi cl}.
The FW transform of the Dirac Hamiltonian given in \eqnref{HFW 4} fully agrees with the classical counterpart \eqnref{H cl}--\eqnref{H spin} with $\mathbf{s}=\frac{\hbar}{2}\boldsigma$ and $\gamma'_m=0$ (or $\gamma_m=\frac{q}{mc}$).

\section{Dirac-Pauli Hamiltonian}\label{sec:Dirac-Pauli Hamiltonian}
As we have proved the exact correspondence between the Dirac Hamiltonian and the classical counterpart in the weak-field limit, we now extend the result to the Dirac-Pauli theory. Again, we first solve the operator $\calX$ by the power series expansion and then obtain the FW transformed Hamiltonian $\calH_\mathrm{FW}$. We again assume $[\pi_i,E_j]=[\pi_i,B_j]=0$ for homogeneous fields and neglect all the terms nonlinear in $F_{\mu\nu}$ in the weak-field limit.

\subsection{Operators $X'_n$}
For the Dirac-Pauli theory, the operator $\calX$ used in Kutzelnigg's method satisfies the condition \eqnref{condition calX}, which reads as
\begin{eqnarray}
{2mc^2}\calX&=&-\calX c\,\sdotp\,\calX+c\,\sdotp+q[\phi,\calX]\nonumber\\
&& -i\frac{\mu''}{c}\sdotE-i\frac{\mu''}{c}\calX\sdotE\,\calX +\frac{\mu''}{c}\{\calX,\sdotB\},
\end{eqnarray}
where we define
\begin{equation}
\mu'':=c\mu',
\end{equation}
as it is more convenient to factor out the dimensionality of $c^{-1}$ in $\mu'$ for the power series method in powers of $c^{-1}$.

Consider the power series of $\calX$ in powers of $c^{-1}$:
\begin{equation}
\calX:=X+X'=\sum_{j=1}^\infty\frac{\calX_j}{c^j}
=\sum_{j=1}^\infty\frac{X_j}{c^j}
+\sum_{j=1}^{\infty}\frac{X'_j}{c^j},
\end{equation}
where $X$ and $X_j$ have been detailed in \secref{sec:Dirac Hamiltonian}.
For the orders of $1/c$, $1/c^2$ and $1/c^3$, we have
\begin{subequations}
\begin{align}
2m\calX_1&=\sdotp, & &\Rightarrow X_1=\eqnref{X1}, & X'_1&=0,\\
2m\calX_2&=0, & &\Rightarrow  X_2=0, & X'_2&=0\\
2m\calX_3&= -\calX_1\sdotp\calX_1 +q[\phi,\calX_1] -i\mu''\sdotE,
& &\Rightarrow  X_3=\eqnref{X3}, & X'_3&=-\frac{i\mu''}{2m}\sdotE.
\end{align}
\end{subequations}
The higher-order terms in the power series of $\calX$ can be determined by the following recursion relations ($j\geq2$):
\begin{subequations}\label{recursion calXj}
\begin{eqnarray}
2m\calX_{2j}&=&-\sum_{k_1+k_2=2j-1}\calX_{k_1}\sdotp\,\calX_{k_2} +q\left[\phi,\calX_{2j-2}\right]\nonumber\\
&&\mbox{}-i\mu''\sum_{k_1+k_2=2j-3}\calX_{k_1}\sdotE\,\calX_{k_2} +\mu''\left\{\calX_{2j-3},\sdotB\right\}\\
2m\calX_{2j+1}&=&-\sum_{k_1+k_2=2j}\calX_{k_1}\sdotp\,\calX_{k_2} +q\left[\phi,\calX_{2j-1}\right]\nonumber\\
&&\mbox{}-i\mu''\sum_{k_1+k_2=2j-2}\calX_{k_1}\sdotE\,\calX_{k_2} +\mu''\left\{\calX_{2j-2},\sdotB\right\},
\end{eqnarray}
\end{subequations}
which together with \eqnref{recursion Xj} lead to the recursion relation for $X'_n$ ($j\geq2$):
\begin{subequations}\label{recursion X'j}
\begin{eqnarray}
\label{recursion a'}
2mX'_{2j}&=&-\sum_{k_1+k_2=2j-1} \left(X_{k_1}\sdotp X'_{k_2}+X'_{k_1}\sdotp X_{k_2}+X'_{k_1}\sdotp X'_{k_2}\right)\nonumber\\
&&\mbox{}-i\mu''\sum_{k_1+k_2=2j-3}\left(X_{k_1}\sdotE\,X_{k_2}+X_{k_1}\sdotE\,X'_{k_2}+X'_{k_1}\sdotE\, X_{k_2}+X'_{k_1}\sdotp X'_{k_2}\right)\nonumber\\
&&\mbox{} +q\left[\phi,X'_{2j-2}\right] +\mu''\left\{X_{2j-3}+X'_{2j-3},\sdotB\right\},\\
\label{recursion b'}
2mX'_{2j+1}&=&-\sum_{k_1+k_2=2j} \left(X_{k_1}\sdotp X'_{k_2}+X'_{k_1}\sdotp X_{k_2}+X'_{k_1}\sdotp X'_{k_2}\right)\nonumber\\
&&\mbox{}-i\mu''\sum_{k_1+k_2=2j-2}\left(X_{k_1}\sdotE\,X_{k_2}+X_{k_1}\sdotE\,X'_{k_2}+X'_{k_1}\sdotE\, X_{k_2}+X'_{k_1}\sdotp X'_{k_2}\right)\nonumber\\
&&\mbox{} +q\left[\phi,X'_{2j-1}\right] +\mu''\left\{X_{2j-2}+X'_{2j-2},\sdotB\right\}.
\end{eqnarray}
\end{subequations}
Neglecting nonlinear terms in $\mathbf{E}$ and $\mathbf{B}$, the leading terms $X'_j$ read as
\begin{subequations}\label{leading X'j}
\begin{align}
  X'_1&=0, & X'_2&=0, \\
  X'_3&= -\frac{i\mu''}{2m}\sdotE, & X'_4&=\frac{\mu''}{2m^2}\Bdotp,\\
  X'_5&=\frac{3}{8}\frac{i\mu''}{m^3}\boldpi^2(\sdotE)-\frac{i\mu''}{4m^3}(\sdotp)(\Edotp),
  &
  X'_6&=-\frac{3}{8}\frac{\mu''}{m^4}\boldpi^2(\Bdotp),\\
  X'_7&=-\frac{5}{16}\frac{i\mu''}{m^5}\boldpi^4(\sdotE) +\frac{1}{4}\frac{i\mu''}{m^5}\boldpi^2(\sdotp)(\Edotp),
  &
  X'_8&=\frac{5}{16}\frac{\mu''}{m^6}\boldpi^4(\Bdotp),\\
  X'_9&=\frac{35}{128}\frac{i\mu''}{m^7}\boldpi^6(\sdotE)-\frac{15}{64}\frac{i\mu''}{m^7}\boldpi^4(\sdotp)(\Edotp),
  &
  X'_{10}&=-\frac{35}{128}\frac{\mu''}{m^8}\boldpi^6(\Bdotp),\\
  X'_{11}&=-\frac{63}{256}\frac{i\mu''}{m^9}\boldpi^8(\sdotE) +\frac{7}{32}\frac{i\mu''}{m^9}\boldpi^6(\sdotp)(\Edotp),
  &
  X'_{12}&=\frac{63}{256}\frac{\mu''}{m^{10}}\boldpi^8(\Bdotp).
\end{align}
\end{subequations}
(These where laboriously calculated in \cite{Chen2013}.)

Based on the result of \eqnref{leading X'j}, we can conjecture the following theorem and provide its proof by mathematical induction.
\begin{theorem}\label{thm:2}
In the weak-field limit, we neglect nonlinear terms in $\mathbf{E}$ and $\mathbf{B}$. If the electromagnetic field is homogeneous (thus, $[\pi_i,E_j]=[\pi_i,B_j]=0$), the generic expression for $X'_{n\geq2}$ is given by
\begin{subequations}\label{X'2j and X'2j+1}
\begin{eqnarray}
\label{X'2j}
X'_{2j}&=&2b_{j-1}\frac{\mu''(-1)^{j}}{(2m)^{2j-2}}\,\boldpi^{2j-4}(\Bdotp),\\
\label{X'2j+1}
X'_{2j+1}&=&b_j\frac{i\mu''(-1)^{j}}{(2m)^{2j-1}}\,\boldpi^{2j-2}(\sdotE)\nonumber\\
&&\mbox{}
+d_j\frac{i\mu''(-1)^{j+1}}{(2m)^{2j-1}}\,\boldpi^{2j-4}(\sdotp)(\Edotp),
\end{eqnarray}
\end{subequations}
where the coefficients $b_j$ are given by \eqnref{bj} and $d_j$ are defined as
\begin{equation}\label{dj}
d_{j\geq2}=\sum_{j_1+j_2+j_3=j-2}2(j_1+1)a_{j_1}a_{j_2}a_{j_3},
\qquad
d_{j=0}=d_{j=1}=0.
\end{equation}
\end{theorem}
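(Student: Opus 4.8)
The plan is to prove Theorem~\ref{thm:2} by induction on $n$, in close parallel with the proof of Theorem~\ref{thm:1}, and exploiting the fact that, once all contributions nonlinear in $F_{\mu\nu}$ are dropped, the recursion relations \eqnref{recursion X'j} become \emph{linear} in the primed operators: since each $X'_k$ is already first order in the fields, every product $X'_{k_1}(\sdotp)X'_{k_2}$, every $X_{k_1}(\sdotE)X'_{k_2}$-type term, and every commutator $q[\phi,X'_k]$ (because $[\phi,\boldpi]\propto\mathbf{E}$) is second order and therefore dropped; this is why no $q$-dependence survives in \eqnref{X'2j and X'2j+1}. The base cases $X'_2=0$, $X'_3$, $X'_4$, $X'_5$ are read off \eqnref{leading X'j} and checked against \eqnref{X'2j and X'2j+1} using $b_0=0$, $d_0=d_1=0$, $b_1=1$, $b_2=3$, and $d_2=2a_0^3=2$. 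For the inductive step one assumes \eqnref{X'2j and X'2j+1} for all primed operators of lower order and invokes Theorem~\ref{thm:1} for the unprimed $X_k$ (in particular $X_{2k}=0$).

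First I would treat the even case. In \eqnref{recursion a'} the only surviving pieces are $-\sum_{k_1+k_2=2j-1}\big(X_{k_1}(\sdotp)X'_{k_2}+X'_{k_1}(\sdotp)X_{k_2}\big)$ together with $\mu''\{X_{2j-3},\sdotB\}$; in the convolution $k_1$ is forced odd and $k_2$ even (or vice versa), so one always pairs the $(\sdotp)^{\mathrm{odd}}$ piece of an $X$ with the $(\Bdotp)$ piece of an $X'$, and the $\mathbf{E}$-type primed operators drop out against $X_{\mathrm{even}}=0$. Using \eqnref{identity 3} to replace $(\sdotp)^{2\ell}\to\boldpi^{2\ell}$ (the $\sdotB$ remainder would produce $\mathbf{B}^2$) and \eqnref{identity 1} for $\{\sdotp,\sdotB\}=2(\Bdotp)$, every term collapses onto $\boldpi^{2j-4}(\Bdotp)$, and the numerical prefactor reduces to $2b_{j-1}$ by \eqnref{combo identity b} (applied as $\sum_{j_1+j_2=j-2}a_{j_1}b_{j_2}=\tfrac12(b_{j-1}-a_{j-2})$, the stray $a_{j-2}$ being cancelled by the $\{X_{2j-3},\sdotB\}$ contribution). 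This reproduces \eqnref{X'2j}.

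Next, the odd case. In \eqnref{recursion b'} the surviving pieces are $-\sum_{k_1+k_2=2j}\big(X_{k_1}(\sdotp)X'_{k_2}+X'_{k_1}(\sdotp)X_{k_2}\big)$ and $-i\mu''\sum_{k_1+k_2=2j-2}X_{k_1}(\sdotE)X_{k_2}$, while $\mu''\{X_{2j-2}+X'_{2j-2},\sdotB\}$ vanishes because $X_{2j-2}=0$ by Theorem~\ref{thm:1} and $\{X'_{2j-2},\sdotB\}$ is second order. Here both indices in each convolution are odd, so one contracts the $(\sdotp)^{\mathrm{odd}}$ piece of an $X$ with either the $(\sdotE)$ or the $(\sdotp)(\Edotp)$ piece of an $X'$ (first sum), or with an $(\sdotE)$ sandwiched between two $(\sdotp)^{\mathrm{odd}}$ factors of $X$ (second sum). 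Reducing with \eqnref{identity 3} and then \eqnref{identity 1} (equivalently, invoking \eqnref{identity 4b}) gives $(\sdotp)^{2\ell_1+1}(\sdotE)(\sdotp)^{2\ell_2+1}\to 2\boldpi^{2(\ell_1+\ell_2)}(\sdotp)(\Edotp)-\boldpi^{2(\ell_1+\ell_2)+2}(\sdotE)$, after which everything collapses onto $\boldpi^{2j-2}(\sdotE)$ and $\boldpi^{2j-4}(\sdotp)(\Edotp)$. The coefficient of $\boldpi^{2j-2}(\sdotE)$ reduces to $b_j$ by \eqnref{combo identity a}--\eqnref{combo identity b}, whereas the coefficient of $\boldpi^{2j-4}(\sdotp)(\Edotp)$ comes out as $2\sum_{j_1+j_2=j-1}a_{j_1}d_{j_2}+2a_{j-1}$, which must be shown to equal $d_j$.

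The decisive ingredient is thus the new combinatorial identity $d_j=2\sum_{j_1+j_2=j-1}a_{j_1}d_{j_2}+2a_{j-1}$ for $j\geq2$, which I would establish by the generating-function method already used for \eqnref{combo identities}: with $A(x):=\sum_{j\geq0}a_j\frac{(-1)^j}{2^{2j}}x^{2j}=\frac{2}{1+\sqrt{1+x^2}}$ and $\tilde A(x):=\sum_{j\geq0}(j+1)a_j\frac{(-1)^j}{2^{2j}}x^{2j}=\frac{1}{\sqrt{1+x^2}}$ (the latter being the footnote \eqnref{binomial series b} to \eqnref{series abc}), the definition \eqnref{dj} gives the generating function $\sum_{j\geq2}d_j\frac{(-1)^j}{2^{2j}}x^{2j}=\frac{x^4}{8}\,\tilde A(x)A(x)^2$, and the recursion translates into an elementary identity among these closed forms (both sides reduce to $(s-1)^2/(2s)$ with $s=\sqrt{1+x^2}$). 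I expect the main obstacle to lie not in this identity but in the operator-ordering bookkeeping of the two preceding paragraphs --- in particular, tracking the weight carried by the distinguished $a$-factor produced by the $2(\sdotp)(\Edotp)-\boldpi^2(\sdotE)$ rearrangement, since a single misplaced factor there would corrupt the $d_j$ coefficient and break the induction.
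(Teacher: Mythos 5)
Your proposal is correct and follows essentially the same route as the paper's own proof: induction via the reduced recursions \eqnref{recursion X'j} with \thmref{thm:1} supplying the unprimed pieces, the Pauli-algebra reductions $\{\sdotp,\sdotB\}=2(\Bdotp)$ and $(\sdotp)(\sdotE)(\sdotp)=2(\sdotp)(\Edotp)-\boldpi^2(\sdotE)$, and the combinatorial identities \eqnref{combo identity a}, \eqnref{combo identity b} together with the new identity $d_j=2\sum_{j_1+j_2=j-1}a_{j_1}d_{j_2}+2a_{j-1}$, which is exactly the paper's \eqnref{combo identity e}. Your generating-function verification (both sides equal $(s-1)^2/(2s)$ with $s=\sqrt{1+x^2}$) simply makes explicit what the paper proves via \eqnref{series abc} and \eqnref{series d}, and your coefficient bookkeeping reproduces the paper's result.
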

\begin{proof}[Proof (by induction)]
Note that \eqnref{X'2j and X'2j+1} is valid for $j=1$ and $j=2$ by \eqnref{leading X'j}. Suppose \eqnref{X'2j and X'2j+1} is true for all $X_{2k}$ and $X_{2k+1}$ with $k<j$, we will prove $X'_{2j}$ and $X'_{2j+1}$ to be true for $j\geq2$ by induction.

First, we prove \eqnref{X'2j} for $j\geq2$. With the inductive hypothesis and \eqnref{X2j and X2j+1}, the recursive relation \eqnref{recursion a'} yields
\begin{equation}
2mX'_{2j} = -\sum_{j_1+j_2=j-1}\left(X_{2j_1+1}(\sdotp)X'_{2j_2}+X'_{2j_2}(\sdotp)X_{2j_1+1}\right) +\mu''\left\{X_{2j-3},\,\sdotB\right\},
\end{equation}
where we have neglected nonlinear terms in $\mathbf{E}$ and $\mathbf{B}$.
Applying the inductive hypothesis for $k<j$ and \eqnref{X2j+1}, we have
\begin{eqnarray}\label{pf 2a}
&&2mX'_{2j}\nonumber\\
&=& -\mu''\sum_{j_1+j_2=j-1}2a_{j_1}b_{j_2-1}\frac{(-1)^{j_1+j_2}}{(2m)^{2(j_1+j_2)-1}} \Big((\sdotp)^{2j_1+2}\boldpi^{2j_2-4}(\Bdotp) +\boldpi^{2j_2-4}(\Bdotp)(\sdotp)^{2j_1+2}\Big)\nonumber\\
&&\mbox{}+\mu''a_{j-2}\,\frac{(-1)^{j-2}}{(2m)^{2j-3}} \Big((\sdotp)^{2j-3}(\sdotB)+(\sdotB)(\sdotp)^{2j-3}\Big)\nonumber\\
&=&-2\mu''\sum_{j_1+j_2=j-2}2a_{j_1}b_{j_2}\frac{(-1)^{j_1+j_2+1}}{(2m)^{2(j_1+j_2)+1}} \,\boldpi^{2(j_1+j_2)}(\Bdotp)\nonumber\\
&&\mbox{}+\mu''a_{j-2}\,\frac{(-1)^{j-2}}{(2m)^{2j-3}}\, (\sdotp)^{2j-4}
\Big((\sdotp)(\sdotB)+(\sdotB)(\sdotp)\Big)(\sdotp)^{2j-4}\nonumber\\
&=&2\mu''\left(2\sum_{j_1+j_2=j-2}a_{j_1}b_{j_2}+a_{j-2}\right)
\frac{(-1)^{j}}{(2m)^{2j-3}}\,\boldpi^{2j-4}(\Bdotp),
\end{eqnarray}
where we have used \eqnref{identity 3} to throw away nonlinear terms in $\mathbf{B}$ and used \eqnref{identity 1} with $[\pi_i,B_j]=0$ to get
\begin{eqnarray}
&&(\sdotp)(\sdotB)+(\sdotB)(\sdotp)\nonumber\\
&=&\boldpi\cdot\mathbf{B}+\Bdotp +i\left(\boldpi\times\mathbf{B}+\mathbf{B}\times\boldpi\right)\cdot\boldsigma\nonumber\\
&=&2(\Bdotp).
\end{eqnarray}
By the combinatorial identity \eqnref{combo identity b}, it follows from \eqnref{pf 2a} that $X'_{2j}$ for $j\geq2$ takes the form of \eqnref{X'2j}.

Next, we prove \eqnref{X'2j+1} for $j\geq2$. With the inductive hypothesis and \eqnref{X2j and X2j+1} again, the recursive relation \eqnref{recursion b'} yields
\begin{eqnarray}
2mX'_{2j+1} &=& -\sum_{j_1+j_2=j-1}\left(X_{2j_1+1}(\sdotp)X'_{2j_2+1}+X'_{2j_2+1}(\sdotp)X_{2j_1+1}\right)\nonumber\\ &&\mbox{}-i\mu''\sum_{j_1+j_2=j-2}X_{2j+1}(\sdotE)X_{2j+1},
\end{eqnarray}
where we have neglected nonlinear terms in $\mathbf{E}$ and $\mathbf{B}$.
Applying the inductive hypothesis for $k<j$ and \eqnref{X2j+1}, we have
\begin{eqnarray}\label{pf 2b}
2mX'_{2j+1}
&=&\mbox{}
-i\mu''\sum_{j_1+j_2=j-1} a_{j_1}b_{j_2}\frac{(-1)^{j_1+j_2}}{(2m)^{2(j_1+j_2)}} \,(\sdotp)^{2j_1+2}\boldpi^{2j_2-2}(\sdotE)\nonumber\\
&&\mbox{}
-i\mu''\sum_{j_1+j_2=j-1} a_{j_1}d_{j_2}\frac{(-1)^{j_1+j_2+1}}{(2m)^{2(j_1+j_2)}} \,(\sdotp)^{2j_1+2}\boldpi^{2j_2-4}(\sdotp)(\sdotE)\nonumber\\
&&\mbox{}
-i\mu''\sum_{j_1+j_2=j-1} a_{j_1}b_{j_2}\frac{(-1)^{j_1+j_2}}{(2m)^{2(j_1+j_2)}} \,\boldpi^{2j_2-2}(\sdotE)(\sdotp)(\sdotp)^{2j_1+1}\nonumber\\
&&\mbox{}
-i\mu''\sum_{j_1+j_2=j-1} a_{j_1}d_{j_2}\frac{(-1)^{j_1+j_2+1}}{(2m)^{2(j_1+j_2)}} \,\boldpi^{2j_2-4}(\sdotp)(\Edotp)(\sdotp)(\sdotp)^{2j_1+1}\nonumber\\
&&\mbox{}
-i\mu''\sum_{j_1+j_2=j-2} a_{j_1}a_{j_2}\frac{(-1)^{j_1+j_2}}{(2m)^{2(j_1+j_2)+2}} \,(\sdotp)^{2j_1+1}(\sdotE)(\sdotp)^{2j_2+1}.
\end{eqnarray}
By using \eqnref{identity 3} to throw away nonlinear terms in $\mathbf{B}$ and using \eqnref{identity 1} with $[\pi_i,B_j]=0$ to get
\begin{eqnarray}
&&(\sdotp)(\sdotE)(\sdotp)\nonumber\\
&=&\Big((\boldpi\cdot\mathbf{E})+i(\boldpi\times\mathbf{E})\cdot\boldsigma\Big)(\sdotp)\nonumber\\
&=&(\boldpi\cdot\mathbf{E})(\sdotp) +i(\boldpi\times\mathbf{E})\cdot\boldpi -\left((\boldpi\times\mathbf{E})\times\boldpi\right)\cdot\boldsigma\nonumber\\
&=&(\boldpi\cdot\mathbf{E})(\sdotp) +\left((\boldpi\cdot\mathbf{E})\boldpi-\boldpi^2\mathbf{E}\right)\cdot\boldsigma\nonumber\\
&=&2(\sdotp)(\Edotp)-\boldpi^2(\sdotE),
\end{eqnarray}
\eqnref{pf 2b} then leads to
\begin{eqnarray}\label{pf 2b'}
2mX'_{2j+1}
&=&\mbox{}
-i\mu''\sum_{j_1+j_2=j-1} a_{j_1}b_{j_2}\frac{(-1)^{j_1+j_2}}{(2m)^{2(j_1+j_2)}} \,\boldpi^{2(j_1+j_2)}(\sdotE)\nonumber\\
&&\mbox{}
-i\mu''\sum_{j_1+j_2=j-1} a_{j_1}d_{j_2}\frac{(-1)^{j_1+j_2+1}}{(2m)^{2(j_1+j_2)}} \,\boldpi^{2(j_1+j_2)-2}(\sdotp)(\sdotE)\nonumber\\
&&\mbox{}
-i\mu''\sum_{j_1+j_2=j-1} a_{j_1}b_{j_2}\frac{(-1)^{j_1+j_2}}{(2m)^{2(j_1+j_2)}} \,\boldpi^{2(j_1+j_2)}(\sdotE)\nonumber\\
&&\mbox{}
-i\mu''\sum_{j_1+j_2=j-1} a_{j_1}d_{j_2}\frac{(-1)^{j_1+j_2+1}}{(2m)^{2(j_1+j_2)}} \,\boldpi^{2(j_1+j_2)-2}(\sdotp)(\Edotp)\nonumber\\
&&\mbox{}
-2i\mu''\sum_{j_1+j_2=j-2} a_{j_1}a_{j_2}\frac{(-1)^{j_1+j_2}}{(2m)^{2(j_1+j_2)+2}} \,\boldpi^{2(j_1+j_2)}(\sdotp)(\Edotp)\nonumber\\
&&\mbox{}
+i\mu''\sum_{j_1+j_2=j-2} a_{j_1}a_{j_2}\frac{(-1)^{j_1+j_2}}{(2m)^{2(j_1+j_2)+2}} \,\boldpi^{2(j_1+j_2)+2}(\sdotE),
\end{eqnarray}
and consequently
\begin{eqnarray}\label{pf 2b''}
X'_{2j+1}
&=&
i\mu''\left(\sum_{j_1+j_2=j-1} 2a_{j_1}b_{j_2} +\sum_{j_1+j_2=j-1} a_{j_1}a_{j_2}\right)\frac{(-1)^{j}}{(2m)^{2j-2}} \,\boldpi^{2j-2}(\sdotE)\\
&&\mbox{}
+2i\mu''\left(\sum_{j_1+j_2=j-1} a_{j_1}d_{j_2} +\sum_{j_1+j_2=j-1} a_{j_1}a_{j_2}\right) \frac{(-1)^{j+1}}{(2m)^{2j-2}} \,\boldpi^{2(j_1+j_2)-2}(\sdotp)(\sdotE).\nonumber
\end{eqnarray}
The combinatorial identities \eqnref{combo identity a} and \eqnref{combo identity b} immediately imply that the summations inside the first pair of parentheses in \eqnref{pf 2b''} are equal to $b_j$. Furthermore, by \eqnref{combo identity a} and the new combinatorial identity (its proof will be provided shortly)
\begin{equation}\label{combo identity e}
\text{for}\ j\geq2: \quad 2\sum_{j_1+j_2=j-1} a_{j_1}d_{j_2} + 2a_{j-1} = d_j,
\end{equation}
the summations inside the second pair of parentheses in \eqnref{pf 2b''} are equal to $d_j$. Consequently, it follows from \eqnref{pf 2b''} that $X'_{2j+1}$ for $j\geq2$ takes the form of \eqnref{X'2j+1}.

We have proved both \eqnref{X'2j} and \eqnref{X'2j+1} by mathematical induction.
\end{proof}

\subsection{Operators $X'$ and $X'^\dag$}
We have the Taylor series with the radius of convergence $|x|<1$:
\begin{equation}
\label{series d}
\sum_{j=2}^\infty d_j\frac{(-1)^{j}}{2^{2j-1}}\,x^{2j-4}
=\frac{1}{\sqrt{1+x^2}} \left(\frac{1}{1+\sqrt{1+x^2}}\right)^2,
\end{equation}
which, with $d_j$ defined by \eqnref{dj}, can be proven by taking squares on both sides of \eqnref{series a} and then multiplying both sides by \eqnref{binomial series b}.
Similarly, exploiting \eqnref{series abc} and \eqnref{series d} also enables us to prove the combinatorial identities \eqnref{combo identity e} and
\begin{equation}\label{combo identity f}
\text{for}\ j\geq0: \quad b_{j+1}+a_j=d_{j+1}.
\end{equation}

By \eqnref{X'2j and X'2j+1}, we obtain the Taylor series of the $X'$ operator:
\begin{eqnarray}\label{series X'}
X' &=& \sum_{j=1}^\infty\frac{X'_j}{c^j}= \sum_{j=1}^\infty\frac{X'_{2j}}{c^{2j}} + \sum_{j=1}^\infty\frac{X'_{2j+1}}{c^{2j+1}}\nonumber\\
&=& -2\mu''\sum_{j=1}^\infty b_j\frac{(-1)^j}{(2mc)^{2j}}\,\boldpi^{2j-2}(\Bdotp)
+i\mu''\sum_{j=1}^\infty b_j\frac{(-1)^j}{(2mc)^{2j-1}}\,\boldpi^{2j-2}(\sdotE)\nonumber\\
&&\mbox{} -i\mu''\sum_{j=2}^\infty d_j\frac{(-1)^j}{(2mc)^{2j-1}}\,\boldpi^{2j-4}(\sdotp)(\Edotp).
\end{eqnarray}
Adopting $[\pi_i,E_j]=[\pi_i,B_j]=0$, we have
\begin{eqnarray}\label{series X'dagger}
X'^\dag &=&
-2\mu''\sum_{j=1}^\infty b_j\frac{(-1)^j}{(2mc)^{2j}}\,\boldpi^{2j-2}(\Bdotp)
-i\mu''\sum_{j=1}^\infty b_j\frac{(-1)^j}{(2mc)^{2j-1}}\,\boldpi^{2j-2}(\sdotE)\nonumber\\
&&\mbox{} +i\mu''\sum_{j=2}^\infty d_j\frac{(-1)^j}{(2mc)^{2j-1}}\,\boldpi^{2j-4}(\sdotp)(\Edotp).
\end{eqnarray}
By \eqnref{series b} and \eqnref{series d}, the Taylor series of the operator $X'$ given in \eqnref{series X'} converges to a closed form provided that
\begin{equation}\label{convergence condition'}
\left|\boldpi^2\right|<m^2c^2.
\end{equation}
We will discuss the condition for convergence in the end of \secref{sec:calHFW}.

\subsection{Operator $\calH_\mathrm{FW}$}\label{sec:calHFW}
We have \eqnref{calHFW Kutzelnigg}
with
\begin{equation}
\calX=X+X'.
\end{equation}
Because $X'$ is of the order $\bigO(F_{\mu\nu})$ as shown in \eqnref{series X'}, up to $\bigO(F_{\mu\nu})$, \eqnref{calHFW Kutzelnigg} leads to
\begin{eqnarray}\label{HFW+H'FW}
\calH_\mathrm{FW}
&=& mc^2 + \sqrt{1+X^\dag X}
\left(q\phi +c\,\sdotp X\right)
\frac{1}{\sqrt{1+X^\dag X}}\nonumber\\
&&\mbox{}
+\left(c\,\sdotp X' -\mu'\sdotB +i\mu'\sdotE\,X\right)\nonumber\\
&=:&H_\mathrm{FW}+H'_\mathrm{FW},
\end{eqnarray}
where the first half part is identified as $H_\mathrm{FW}$ by \eqnref{HFW 1a}, and the second half is called $H'_\mathrm{FW}$.

By \eqnref{series X} and \eqnref{series X'}, we have
\begin{eqnarray}\label{H'FW 1}
H'_\mathrm{FW} &=& c\,\sdotp X' -\mu'\sdotB +i\mu'\sdotE\,X \nonumber\\
&=& -2\mu'\sum_{j=1}^\infty b_j\frac{(-1)^j}{(2mc)^{2j}}\,\boldpi^{2j-2}(\sdotp)(\Bdotp)\nonumber\\
&&\mbox{} +i\mu'\sum_{j=1}^\infty b_j\frac{(-1)^j}{(2mc)^{2j-1}}\,\boldpi^{2j-2}(\Edotp)
+\mu'\sum_{j=1}^\infty b_j\frac{(-1)^j}{(2mc)^{2j-1}}\,\boldpi^{2j-2}(\Etimesp)\cdot\boldsigma\nonumber\\
&&\mbox{}-i\mu'\sum_{j=2}^\infty d_j\frac{(-1)^j}{(2mc)^{2j-1}}\,\boldpi^{2j-2}(\Edotp)\nonumber\\
&&\mbox{}-\mu'\sdotB\nonumber\\
&&\mbox{} -i\mu'\sum_{j=0}^\infty a_j\frac{(-1)^j}{(2mc)^{2j+1}}\,\boldpi^{2j}(\Edotp)
-\mu'\sum_{j=0}^\infty a_j\frac{(-1)^j}{(2mc)^{2j+1}}\,\boldpi^{2j}(\Etimesp)\cdot\boldsigma,
\end{eqnarray}
where we have used \eqnref{identity 1} and \eqnref{identity 3} and neglected nonlinear terms in $F_{\mu\nu}$. Equation~\eqnref{H'FW 1} leads to
\begin{eqnarray}\label{H'FW 2}
H'_\mathrm{FW}
&=& -2\mu'\sum_{j=1}^\infty b_j\frac{(-1)^j}{(2mc)^{2j}}\,\boldpi^{2j-2}(\sdotp)(\Bdotp)\nonumber\\
&&\mbox{} +\mu'\left(\sum_{j=1}^\infty b_j\frac{(-1)^j}{(2mc)^{2j-1}}\,\boldpi^{2j-2}
-\sum_{j=0}^\infty a_j\frac{(-1)^j}{(2mc)^{2j+1}}\,\boldpi^{2j}\right)
(\Etimesp)\cdot\boldsigma\nonumber\\
&&\mbox{}-\mu'\sdotB\nonumber\\
&&\mbox{} -i\mu'\sum_{j=0}^\infty \left(b_{j+1}-d_{j+1}+a_j\right) \frac{(-1)^j}{(2mc)^{2j+1}}\,\boldpi^{2j}(\Edotp).
\end{eqnarray}
By \eqnref{combo identity f}, we find that the antihermitian part in \eqnref{H'FW 2} vanishes identically. Furthermore, by \eqnref{series a} and \eqnref{series b}, we have
\begin{eqnarray}\label{H'FW 3}
H'_\mathrm{FW}
&=& -\mu'\left(\frac{1}{1+\sqrt{1+\left(\frac{\boldpi}{mc}\right)^2}} -\frac{1}{\sqrt{1+\left(\frac{\boldpi}{mc}\right)^2}}\right)
\frac{(\sdotp)(\Bdotp)}{(mc)^2}\nonumber\\
&&\mbox{} -\mu'\left(\frac{1}{\sqrt{1+\left(\frac{\boldpi}{mc}\right)^2}}\right)
\frac{(\Etimesp)\cdot\boldsigma}{mc}
-\mu'\sdotB\nonumber\\
&=&\mu'\left(\frac{1}{\gamma_{\boldpi}}-\frac{1}{1+\gamma_{\boldpi}}\right) \boldsigma\cdot\frac{\boldpi}{mc}\left(\frac{\boldpi}{mc}\cdot\mathbf{B}\right)
+\mu'\frac{1}{\gamma_{\boldpi}}\boldsigma\cdot\left(\frac{\boldpi}{mc}\times\mathbf{E}\right)
-\mu'\sdotB,
\end{eqnarray}
where $\gamma_{\boldpi}$ is defined in \eqnref{gamma pi}.

With \eqnref{HFW 4} and \eqnref{H'FW 3}, we have
\begin{eqnarray}\label{calHFW}
\calH_\mathrm{FW}(\mathbf{x},\mathbf{p},\boldsigma) &=& H_\mathrm{FW}+H'_\mathrm{FW}\nonumber\\
&=&\sqrt{m^2c^4+c^2\boldpi^2}\,+q\phi(\mathbf{x})\nonumber\\
&&\mbox{}-\boldsigma\cdot\left[
\left(\mu'+\frac{q\hbar}{2mc}\frac{1}{\gamma_{\boldpi}}\right)\mathbf{B}
-\mu'\frac{1}{\gamma_{\boldpi}(1+\gamma_{\boldpi})}
\left(\frac{\boldpi}{mc}\cdot\mathbf{B}\right)\frac{\boldpi}{mc}
\right.\nonumber\\
&&\qquad\qquad
\left.
\mbox{}-\left(\mu'\frac{1}{\gamma_{\boldpi}} +\frac{q\hbar}{2mc}\frac{1}{\gamma_{\boldpi}(1+\gamma_{\boldpi})}\right)
\left(\frac{\boldpi}{mc}\times\mathbf{E}\right)
\right],
\end{eqnarray}
which is exactly the same as \eqnref{calHFW conjecture} except that the Darwin term vanishes and the operator orderings are superfluous. This proves that, in the weak-field limit, the FW transform of the Dirac-Pauli Hamiltonian is in complete agreement with the classical counterpart \eqnref{H cl}--\eqnref{H spin} with $\mathbf{s}=\frac{\hbar}{2}\boldsigma$ and $\mu'=\frac{\hbar}{2}\gamma'_m$.

Note that, by applying the Taylor series \eqnref{series abc} and \eqnref{series d}, the functions of the operator $\Omega=\sdotp/(mc)$ or $\Omega=\boldpi/(mc)$ are understood via the Taylor series as
\begin{equation}
f\left(1+\Omega^2\right)
=\sum_{n=0}^{\infty}\frac{f^{(n)}(1)}{n!}\,\Omega^{2n},
\end{equation}
which produces convergent results provided that the spectrum of $\Omega$ satisfies $|\Omega^2|<1$. This requires the conditions of \eqnref{convergence condition} and \eqnref{convergence condition'} to be satisfied.
In comparison with the classical theory in the weak-field regime, in which $\boldpi$ remains as the kinematic momentum associated with $\mathbf{v}$ as indicated by \eqnref{pi appro} and \eqnref{gamma pi appro}, the conditions \eqnref{convergence condition} and \eqnref{convergence condition'} correspond to $|\mathbf{v}|<c/\sqrt{2}$ (which is well beyond the low-speed limit).
Once the operators $X$ and $X'$ converge to closed forms for $|\mathbf{v}|<c/\sqrt{2}$, their closed forms are in fact upheld even beyond the conditions of \eqnref{convergence condition} and \eqnref{convergence condition'}. This is because, instead of the Taylor series \eqnref{series abc} and \eqnref{series d}, the pertinent function $1/\sqrt{1+\Omega^2}$ can be alternatively understood in terms of the integral
\begin{equation}\label{Gaussian int}
\frac{1}{\sqrt{1+\Omega^2}} =\lim_{N\rightarrow\infty} \int_{-N}^N d\eta\, e^{-\pi\eta^2(1+\Omega^2)},
\end{equation}
where the exponential operator is defined by means of its Taylor expansion. The form of \eqnref{Gaussian int} gives convergent results for \emph{all} $\Omega$.\footnote{Here, we have adopted the idea propounded in \cite{Eriksen1958}.}
Therefore, even though the Taylor series \eqnref{series abc} and \eqnref{series d} break down when \eqnref{convergence condition} and \eqnref{convergence condition'} do not hold, the resulting $\calH_\mathrm{FW}$ in \eqnref{calHFW} as a closed form nevertheless remains valid (as long as the applied electromagnetic field is weak enough so that nonlinear terms in $F_{\mu\nu}$ can be neglected).

\section{Summary and discussion}\label{sec:summary}
In Kutzelnigg's implementation of DPT improved with a further simplification scheme, the FW transform of the Dirac-Pauli Hamiltonian is given by \eqnref{calHFW Kutzelnigg} with $\calX$ satisfying \eqnref{condition calX}, which reduces to \eqnref{HFW Kutzelnigg} with $X$ satisfying \eqnref{condition X} for the Dirac Hamiltonian. For the two special cases studied in \secref{sec:special case I} and \secref{sec:special case II}, the exact FW transformed Hamiltonians exist and agree with those obtained by Eriksen's method \cite{Eriksen1958}. Existence of the exact FW transformation in the first special case is accordant with the fact that charged particle-antiparticle pairs are not produced by any static magnetic field no matter how strong the field strength is \cite{Kim:2000un,Kim:2011cx}. On the other hand, the physical relevance of the exact FW transformation in the second case is unclear and requires further research.

The conditions for the operators $X$ and $\calX\equiv X+X'$ give rise to the recursion relations \eqnref{recursion Xj}, \eqnref{recursion calXj}, and \eqnref{recursion X'j} for their power series. When the applied electromagnetic field is static and homogeneous, in the weak-field limit in which nonlinear terms in $F_{\mu\nu}$ are neglected, we have \thmref{thm:1} and \thmref{thm:2}, which are proven by mathematical induction via the recursion relations and various combinatorial identities. Consequently, the resulting FW transformed Dirac-Pauli Hamiltonian in the weak-field limit is given by \eqnref{calHFW}, which is in full agreement with the classical counterpart \eqnref{H cl}--\eqnref{H spin} with $\mathbf{s}=\frac{\hbar}{2}\boldsigma$ and $\mu'=\frac{\hbar}{2}\gamma'_m$.

If the applied electromagnetic field is inhomogeneous, it is suggested in \cite{Chen2014} that the FW transform in the weak-field limit takes the form of \eqnref{calHFW conjecture}, which is an extension of \eqnref{calHFW} with corrections of the Darwin term and operator orderings. A rigorous proof of \eqnref{calHFW conjecture} in the style of this paper is however much more difficult, as it is very cumbersome to keep track of operator orderings in an order-by-order scenario. Instead, applying the alternative block-diagonalization method via the expansion in powers of the Planck constant $\hbar$ \cite{Silenko2003,Bliokh2005,Goss2007,Goss2007b} might provide a better route to investigate the quantum corrections arising from zitterbewegung (which is responsible for the Darwin term) and operator orderings. Furthermore, as we have remarked that it might not be legitimate to block-diagonalize the Dirac or Dirac-Pauli Hamiltonian in strong fields except for special cases, the method of expansion in $\hbar$ \cite{Silenko2008} may help to elucidate the breakdown of particle-antiparticle separation in strong fields (also see \cite{Silenko2015}).

\begin{acknowledgments}
 D.W.C.\ would like to thank Sang Pyo Kim for valuable discussions. D.W.C.\ was supported in part by the Ministry of Science and Technology (Taiwan) under the Grants No.\ 101-2112-M-002-027-MY3 and No.\ 101-2112-M-003-002-MY3, and T.W.C. under the Grant No.\ 101-2112-M-110-013-MY3.
\end{acknowledgments}

\appendix*

\section{Useful formulae and lemmas}
The Pauli matrices satisfy the identity
\begin{equation}\label{identity 1}
(\boldsigma\cdot\mathbf{a})(\boldsigma\cdot\mathbf{b}) =\mathbf{a}\cdot\mathbf{b}+i(\mathbf{a}\times\mathbf{b})\cdot\boldsigma
\end{equation}
for arbitrary vectors $\mathbf{a}$ and $\mathbf{b}$.
Meanwhile, we have
\begin{equation}\label{identity 2}
(\boldsymbol{\nabla}\times\mathbf{a}+\mathbf{a}\times\boldsymbol{\nabla})\psi
=(\boldsymbol{\nabla}\times\mathbf{a})\psi.
\end{equation}
By \eqnref{identity 1} and \eqnref{identity 2}, we have
\begin{equation}\label{identity 3}
(\sdotp)^2=\boldpi^2-\frac{q\hbar}{c}\,\sdotB.
\end{equation}

Consider the commutator between $\phi$ and $\sdotp$. We have
\begin{equation}
[\phi,\,\sdotp]=i\hbar(\boldsigma\cdot\boldsymbol{\nabla})\phi=i\hbar(\sdotE),
\end{equation}
and consequently
\begin{eqnarray}
&&\left[\phi,\,(\sdotp)^2\right]=\sdotp[\phi,\sdotp]+[\phi,\sdotp]\sdotp\nonumber\\
&=&i\hbar\left[(\sdotp)\cdot(\sdotE)+(\sdotE)\cdot(\sdotp)\right]\nonumber\\
&=&i\hbar\left[\boldpi\cdot\mathbf{E}+\Edotp
+i\left(\left(\frac{\hbar}{i}\boldsymbol{\nabla}-\frac{q}{c}\mathbf{A}\right)\times\mathbf{E} +\mathbf{E}\times\left(\frac{\hbar}{i}\boldsymbol{\nabla}-\frac{q}{c}\mathbf{A}\right)\right)
\cdot\boldsigma
\right]\nonumber\\
&=&i\hbar(\boldpi\cdot\mathbf{E}+\Edotp)=2i\hbar(\Edotp),
\end{eqnarray}
where we have applied the identities \eqnref{identity 1} and \eqnref{identity 2} and assumed $\mathbf{E}$ is homogeneous.

As we consider only the terms linear in $\mathbf{E}$ and $\mathbf{B}$, we neglect the second term in \eqnref{identity 3} whenever it is multiplied by the terms containing $\mathbf{E}$ or $\mathbf{B}$. Consequently, by induction, we have
\begin{subequations}\label{identity 4}
\begin{eqnarray}
\label{identity 4a}
\left[\phi,\,(\sdotp)^{2n}\right]&=&(2n)i\hbar\,\boldpi^{2(n-1)}(\Edotp),\\
\label{identity 4b}
\left[\phi,\,(\sdotp)^{2n+1}\right] &=& i\hbar\,\boldpi^{2n}(\sdotE) +(2n)i\hbar\,\boldpi^{2n-2}(\sdotp)(\Edotp).
\end{eqnarray}
\end{subequations}

%

\end{document}